\newtheorem{theorem}{Theorem}
\renewenvironment{proof}[1][Proof:]
{\begin{trivlist}\item[\hskip \labelsep {\itshape {\bfseries #1}}] }
{\qed\end{trivlist}}
\begin{document}


    \title{The success of complex networks at criticality}
    

	\author{Victor Hernandez-Urbina}
	\email[]{j.v.hernandez-urbina@ed.ac.uk}
	\author{J. Michael Herrmann}
	\affiliation{Institute of Perception, Action and Behaviour, University of Edinburgh. EH8 9AB \\
		United Kingdom.}
	\author{Tom L. Underwood}
	\affiliation{School of Physics and Astronomy, SUPA, University of Edinburgh. EH9 3JZ \\
		United Kingdom.}


	\date{\today}

	\begin{abstract}
			In spiking neural networks an action potential could in principle trigger subsequent spikes
			in the neighbourhood of the initial neuron.
			A successful spike is that which trigger subsequent spikes giving rise to cascading behaviour within the system.
			In this study we introduce a metric to assess the success of spikes emitted by integrate-and-fire 
			neurons arranged in complex topologies and whose collective behaviour is undergoing a phase transition
			that is identified by neuronal avalanches
			that become clusters of activation whose distribution of sizes can be approximated by a power-law.
			In numerical simulations we report that scale-free networks with the small-world property is the structure in which
			neurons possess more successful spikes.
			As well, we conclude both analytically and in numerical simulations that fully-connected networks are structures
			in which neurons perform worse.
			Additionally, we study how the small-world property affects spiking behaviour and its success in scale-free networks.
	\end{abstract}

	\pacs{05.65.+b, 05.70.Fh, 05.70.Jk}

	\maketitle

	\section{Introduction}
		\label{intro}
		Cascading behaviour in complex networks refers to a domino effect resulting from the activation
		of nodes in a network whose internal dynamics are subject to threshold mechanisms
		and propagation of events (e.g. action potentials, diseases, fads, articles becoming cited, etc.).
		In the context of cascading behaviour, the success of a node refers to its capacity (once
		it becomes active as a result of its internal dynamics) to trigger subsequent activations
		in its neighbourhood.
		Intuitively, network structure plays a major role on determining how activity spreads through a system.
		In this respect, the discovery of topological features such as the small-world property and the scale-invariance
		of the degree distribution in many real-world networks provided a new perspective in which to analyze
		cascading activity.
		
		The study of complex networks took off when it was observed that
		the essence of real-world networks cannot be captured by the
		random network model introduced by Erd\"os and Renyi~\cite{newman2010networks}
		nor by regular structures such as lattices.
		The Watts and Strogatz model~\cite{watts1998collective} was proposed to describe
        		a class of networks that lie halfway between randomness and regularity.
		This class of networks are characterized by a small average shortest path length (a feature observed
		in random networks) and an average clustering coefficient significantly larger than 
		expected by chance (a feature observed in regular lattices). 
		Taken together, these properties offer a structural benefit to the processes taking place 
		within the network, such as optimal information transmission 
		that results from speeding up the communication among otherwise distant nodes.
		A term that summarizes the presence of these two properties is that of 
		the \emph{small-world property}.
		Networks that exhibit the small-world property
		are so diverse and can be found in social, technological 
		and biological contexts, to name a few~\cite{humphries2008network}.

		Scale-invariance in the distribution of the node degrees of a network is a phenomenon 
		observed in real-world networks. 
        		Networks that exhibit this particular feature are known as \emph{scale-free networks}.
		In this type of networks the 
		probability $P(k)$ that a node connects to $k$ other nodes follows a power-law 
		$P(k)\sim k^{-\gamma}$~\cite{barabasi1999emergence}. It implies the existence of many poorly 
		connected nodes coexisting with very few but 
		not negligible massively connected nodes \emph{hubs}. 
		Scale-invariant degree distribution and the small-world
		property are by no means exclusive and in fact many
		scale-invariant networks are also small-world. 		

		When it comes to the dynamics of a system comprising numerous interconnected elements
		interacting non-linearly,
		a considerable number of studies have been dedicated to the occurrence of power-law 
		behaviour and its relationship to the notion of phase transitions, $1/f^{\alpha}$ noise
		and \emph{self-organized criticality} 
		(SOC)~\cite{bak1988self} that results from the collective dynamics of 
		the threshold units comprising a system.

		The concept of SOC, 
		has been suggested to explain the dynamics of phenomena as diverse as 
		plate tectonics~\cite{gutenberg1956magnitude}, piles of granular
		matter~\cite{frette1996avalanche}, forest fires~\cite{bak1990forest}, neuronal 
		avalanches~\cite{eurich2002finite} (see below), 
		and mass extinctions~\cite{bak1997nature}, among several others.
		Moreover, SOC implies the existence of a critical point that becomes an attractor in the 
		collective dynamics of a system.
		Such a critical point or regime denotes a state of the system in which the 
	    	collective dynamics are undergoing a phase transition.
		As such, it represents the boundary between two different states of the system (e.g.~order 
		and chaos) and it is identified by the presence of power laws
		in the distribution of events, the divergence of the correlation length, among 
		others~\cite{bak1988self}.
		
        		In the context of brain networks, the presence of neuronal avalanches has been
	        observed as the result of spontaneous activity in local field potentials of cultured slices
        		of rat cortex~\cite{beggs2003neuronal},
		and in the superficial cortical layers of awake, resting 
		primates~\cite{petermann2009spontaneous}.
		As the name suggests, neuronal avalanches are an example of cascading behaviour 
		triggered by spiking in groups of neurons.
	    	The observed avalanches are stable and repeatable spatiotemporal patterns of 
		activity~\cite{beggs2004neuronal}, which might relate them to memory mechanisms inside 
		the brain.
		
		In models of neuronal avalanches, it has been reported that the distribution of their sizes
		as well as the distribution of their durations can be approximated by a power 
		law with very precise exponents in the thermodynamic 
		limit as well as scaling relationships among system sizes and 
		exponents~\cite{eurich2002finite,levina2007dynamical}.
		As mentioned earlier, this phenomenon has also been reported in real brain tissue (although
		with finite-size effects)~\cite{beggs2003neuronal}. 
		Power-law behaviour in the dynamics of neuronal avalanches relate this biological process 
		to the notion of SOC described above.
		
		Critical dynamics of brain networks have been studied thoroughly in artificial models, 
		and it has been found that the critical regime implies several computational benefits for 
		the system, namely: optimal information transmission and 
	        maximum dynamic range~\cite{kinouchi2006optimal}, maximum information 
        		storage~\cite{haldeman2005critical,uhlig2013critical}, 
	        stability of information transmission~\cite{bertschinger2004real}, among others.
        		Hence the criticality hypothesis for brain dynamics, which states that neural networks 
	        operate at the \emph{edge of chaos}, that is, at the critical point in a phase transition 
	        	between total randomness and boring order~\cite{beggs2008criticality}.
		
		In this paper we study the success of integrate-and-fire nodes in terms of their capacity to trigger
		subsequent spikes in their neighbourhood once they become active, and thus giving rise to cascading activity.
		We study this local performance in topologies such as fully-connected, random, and scale-free networks
		with varying amounts of the small-world property when the systems are at the critical state of their
		collective dynamics.

	\section{Model}
		\label{model}
    		\subsection{The Eurich model}
			\label{Sect:model}
			The model consists of $N$ non-leaky
			integrate-and-fire nodes and was formulated for fully-connected networks~\cite{eurich2002finite}.
		    This model exhibits critical dynamics in the collective behaviour of the units
		    comprising the system. 
		    This fact is identified by the presence of neuronal avalanches, whose size and durations
		    can be approximated by a power law.
		    In analytical examinations, the exponents derived for such distributions are
		    $\gamma = -3/2$ and $\delta=-2$ for avalanches sizes and durations respectively,
            in the thermodynamic limit of fully-connected networks~\cite{eurich2002finite,levina2007dynamical}.		   
            Here we will extend such a model by considering also heterogeneous directed networks.
            
		    In the model, each node $j$ is characterised by a continuous
		    variable known as the membrane potential $h$, which is updated in discrete 
		    time according to the equation:

		    \begin{equation}\label{model_h}
			    h_{j}(t+1) = h_{j}(t) + \sum_{i = 1}^{N} A_{ij}w_{ij}s_{i}(t) + I_{ext}
		    \end{equation}
		
		    \noindent where $A$ denotes the asymmetric adjacency matrix with entries $A_{ij} = 1$ if 
		    node $i$ sends and edge to node $j$, and $A_{ij} = 0$ otherwise;
		    $w_{ij}$ denotes the synaptic strength from node $i$ to node $j$;
		    $s_{i}(t)\in \{1,0\}$ represents the state of node $i$ (active or quiescent, 
		    respectively) at time $t$;
		    and $I_{ext}$ denotes external input which is supplied to a node depending on the 
		    state of the system at time $t$.
		    This mechanism of external driving works as follows:
		    if there is no activity at time $t$, then a node is chosen uniformly at random and its 
		    membrane potential is increased by a fixed amount through the variable
		    $I_{ext}$.
		    If $h_{i}(t)$ exceeds the threshold $\theta$, which in simulations is set to unity,
		    then node $i$ emits a spike, which
		    changes the state of this node to active ($s_{i}(t) = 1$) and propagates its activity 
		    through its synaptic output.
		    Afterwards, the node is reset, ie. $h_{i}(t+1) = 0$.

		    
		    The coupling strength $w_{ij}$ for every node $i$ sending an edge to node $j$
		    is set according to the equation $w_{ij} = \alpha / \langle e\rangle$
		    where $\alpha$ is the control parameter of the 
		    model and $\langle e\rangle$ denotes the mean degree
		    of the network; which is the same for all the network structures considered (see
		    below), except for fully-connected networks, in which case mean connectivity is
		    $(N-1)$.
		    
		    In the first stages of our experiments we let the parameter $\alpha$ take
		    values in the interval $(0,1)$, and then we measure the deviation from the
		    best power-law fit to the distribution of avalanche sizes (see Sect.~\ref{model:implementation} for details).
		    The values of $\alpha$ for which the deviation is at its minimum are those
		    who lead the system to the critical state, and define the critical interval.
		    In a second stage of our experiments we re-start the system considering only
		    values of $\alpha$ taking uniformly at random from the critical interval.
		    The critical interval varies for different types of networks and system sizes.
		 
		 
		    As mentioned above, in this work we consider different types of network structures as
		    well as system sizes. The topologies considered are:
		    \begin{enumerate}[\it i.]
		        \item fully connected,
                \item random, 
                \item scale-free with \emph{low} mean clustering coefficient (CC) and power-law 
                in the out-degree distribution, 
		        \item scale-free with \emph{high} mean CC and power-law in out-degree 
		        distribution, 
		        \item scale-free with \emph{low} mean CC and power-law in in-degree distribution,
		        \item scale-free with \emph{high} mean CC and power-law in in-degree 
		        distribution.
		    \end{enumerate}
		 
            System sizes for each of these classes of networks are: $128$, $256$, $512$, and $1,024$.
            We should point out an important aspect of the networks that we consider.
            For the case of random and scale-free networks the number of edges is the same for each system 
            size, which results in the same average connectivity for these types of networks.
            Thus, the topology results from a particular permutation of the edges.
            However, this edge permutation is not arbitrary, but results from a particular algorithm
            depending on the structure that we want to obtain.
	        In the case of random networks the mechanism, by which we permutate such edges, is given by
	        the Erd\"os-Renyi model~\cite{newman2003structure},
	        whereas for scale-free networks with tuneable clustering we follow the ideas in Ref.~\cite{holme2002growing},
	         in which the authors present an extended version of the \emph{BA model}~\cite{barabasi1999emergence}, 
	         in which a large mean clustering coefficient is achieved
	        for scale-free networks by adding a triangle-formation step to the preferential attachment
	        algorithm. 
	        This algorithm produces scale-free networks whose mean clustering coefficients
	        match better the observations in real-world networks.
	    
		 
	        In the following section we describe the structural differences of networks whose in-degree
	        distribution follows a power-law, and networks whose out-degree distribution follows a 
	        power-law.

		 \subsection{Broadcasting hubs and absorbing hubs} 
		    Scale-free networks are characterized by a power-law approximation of their degree 
		    distribution~\cite{barabasi1999emergence}. 
		    In the case of directed networks, there are two degree distributions, one corresponding
		    to the out-degrees of nodes and another one for the in-degree of nodes.
		    In general, real-world networks are directed, and very often both their
		    degree distributions can be approximated by a power-law (e.g. the World Wide 
		    Web~\cite{newman2003structure}), or at least one of them (e.g. citation 
		    networks~\cite{newman2003structure}).
		    In either case, the presence of a long-tail in the \emph{out-degree} distribution of a 
		    network implies the existence of \emph{broadcasting hubs}, that is,
		    nodes that have massive outgoing connections compared with other nodes in the system. 
		    On the contrary, the presence of a long-tail in the 
		    \emph{in-degree distribution} implies the existence of \emph{absorbing hubs}.
		    Here, we are interested in analyzing how collective dynamics develop for the case of 
		    networks with broadcasting hubs and for networks with absorbing hubs.
		    In the following, scale-free networks with absorbing hubs will be labeled as
		    \emph{in-degree scale-free networks}, whereas those that contain broadcasting hubs
		    will be termed \emph{out-degree scale-free networks}.
		    
		    As mentioned in the introduction, the small-world property is not a binary one, and as 
		    such, there exist degrees of what we would call
		    \emph{small-world-ness}.
		    All the scale-free networks considered possess the small-world property up to a certain
		    amount.
		    In our model we consider two levels of mean clustering for scale-free networks 
		    (\emph{low} and \emph{high}) by tuning a simple parameter~\cite{holme2002growing}.
		    The process of tuning the mean clustering coefficient in these types of networks has 
		    an immediate effect on the degree of small-world-ness of such networks.
		    Scale-free networks with low mean clustering coefficient possess a low degree
		    of small-world-ness when compared against scale-free networks with high mean clustering
		    coefficient.
		    In this study, we inquire on the effects on criticality for different degrees of
		    small-world-ness.
		    
		\subsection{Node success} 
			\label{model:nodeSuccess}
		    We introduce a local measure of the performance of a node during simulation time.
		    The node success of node $i$ at time $t$ is the fraction of \emph{out-neighbors} of 
		    this node that become active at time $t+1$ when node $i$ spikes at time $t$,
		    in other words:
		    
		    \begin{equation}
			    \label{eq:phi}
			    \varphi_{i}(t) = \frac{\sum_{j = 1}^{N} A_{ij}s_{j}(t+1)}{\sum_{j = 1}^{N} A_{ij}}
		    \end{equation}
		
			\noindent where $A$ is the adjacency matrix, and $s_{j}(t+1)$ the state of node $j$
			at time $t+1$.
			
			Thus, node success measures the performance of an individual spike in terms of the
			subsequent spikes triggered by such initial activation, which occur within the
			out-neighborhood of a given node.
			In contrast to many other popular network statistics (e.g. degree distribution, 
			branching ratio~\cite{beggs2003neuronal}, etc.) node success is a local measure of performance.
			
			We consider two different averages of this measure. First, the \emph{mean node success
			per node} which results from considering only the times in which a node spikes
			and then averaging its node success at each of these times.
			Second, the \emph{mean node success per time step} which results from averaging the node successes
			of all nodes in the system at a particular time step.
			
%
%
%
%

		\subsection{Numerical implementation}
		    \label{model:implementation}
			When starting simulations, all membrane potentials are initialised at random taking
			values in the interval $(0,1)$, whereas all states are set to inactive.
			By means of external driving, activity inside the system in the form of neuronal
			avalanches is guaranteed to occur.
			However, avalanche sizes and their durations will not always be the same nor can they be
			predicted.
			
		    Both the relaxation time towards the critical state as well as the sampling time needed
		    to assess criticality depend on the system size.
			For networks consisting of $128$ nodes we allow critical dynamics to set in for one 
			million time steps according to the Eurich model~\cite{eurich2002finite}; for 
			networks of $256$ nodes we allow critical dynamics to set in for two million time steps, 
			for networks comprising $512$ we allow for three million time steps,
			and finally for networks of $1,024$ elements the dynamics run for four million
			time steps.
			This selection of times is appropriate for large events (that is avalanches that extend
			to the whole network) to take place during simulation time.
			With this in mind, we expect to have small events (i.e.~small avalanches) coexisting 
			with large events (i.e.~avalanches that span the whole system).
			An inspection of the distribution of avalanche sizes after this driving stage shows a 
			distribution that can be approximated by a power law 
			with a cut-off due to the finite nature of the system (see Sect.~\ref{results:avsizes}). 
			The power-law approximation of such a distribution implies that the system is in the 
			critical regime with very frequent small events coexisting with rare but not negligible
			large events. 
		
			We assess the quality of such a power law through the mean-squared deviation 
			$\Delta\gamma$ from the best-matching power law with exponent $\gamma$ obtained through
			regression in log-log scales.
			Our choice of using this method is due to its simplicity and justified by the 
			asymptotic unbiasedness of the estimation.
			When this error function is at its minimum, that is, when the data is best approximated
			by a power-law distribution with exponent $\gamma$, is when the system is at 
			the critical state.
			
			
			For our experiments we consider $50$ different networks per class (\emph{ii} to 
			\emph{vi} described above) and system size for the sake of statistical robustness. 
			In the case of fully-connected networks, as there exists only one fully-connected 
			network of size $N$, randomness is introduced in the seed of the pseudorandom number
			generator used in our code for each realization of the 
			experiment rather than in the structure as for the other network classes considered.
			Experiments were carried out in the EDDIE computer cluster of the University of 
			Edinburgh.
    \section{Results\label{results}}
    
        \subsection{Avalanche size distribution follows a power-law}
	        \label{results:avsizes}
    		As mentioned in Sect.~\ref{model:implementation}, 
	    	we assess the quality of the power-law approximation to the
		    distribution of avalanche sizes by estimating the deviation from the best power-law fit.
		    When such an error function reaches a minimum value of less than or equal to 0.05, we consider the 
		    event-size distribution as well approximated by a power-law and conclude that the system 
		    is in a critical state.
		    Fig.~\ref{fig1:a} shows the power-law fitting error as a function of simulation time
		    for the distribution
		    of avalanche sizes for scale-free and random networks of size $N = 512$.
		    This figure shows the deviation, $\Delta\gamma$, of our data from the best matching power 
		    law with exponent $\gamma$. 
		    In this figure, we present mean values and standard deviations of $\Delta\gamma$
		    obtained from the realizations of our experiments.
		    Fig.~\ref{fig1:b} shows the distribution of avalanches sizes for all scale-free
		    and random networks of size $N = 512$ at criticality 
		    (ie. when $\Delta\gamma\leq 0.05$ around time step $3\times10^{6}$ in Fig.~\ref{fig1:a}).
		    We show the averaged value of all realisations, but we do not present error bars
		    in Fig.~\ref{fig1:b} in order to make its presentation more accessible.
	        Although we show the distribution of avalanche sizes and the deviation from the best
	        matching power-law for a particular system size, all system sizes exhibit a similar
	        behaviour.

	    	\begin{figure}[t]
		    	\centering
			    \subfloat[Power-law fitting deviation per time step]{%
    				\includegraphics[scale=0.28]{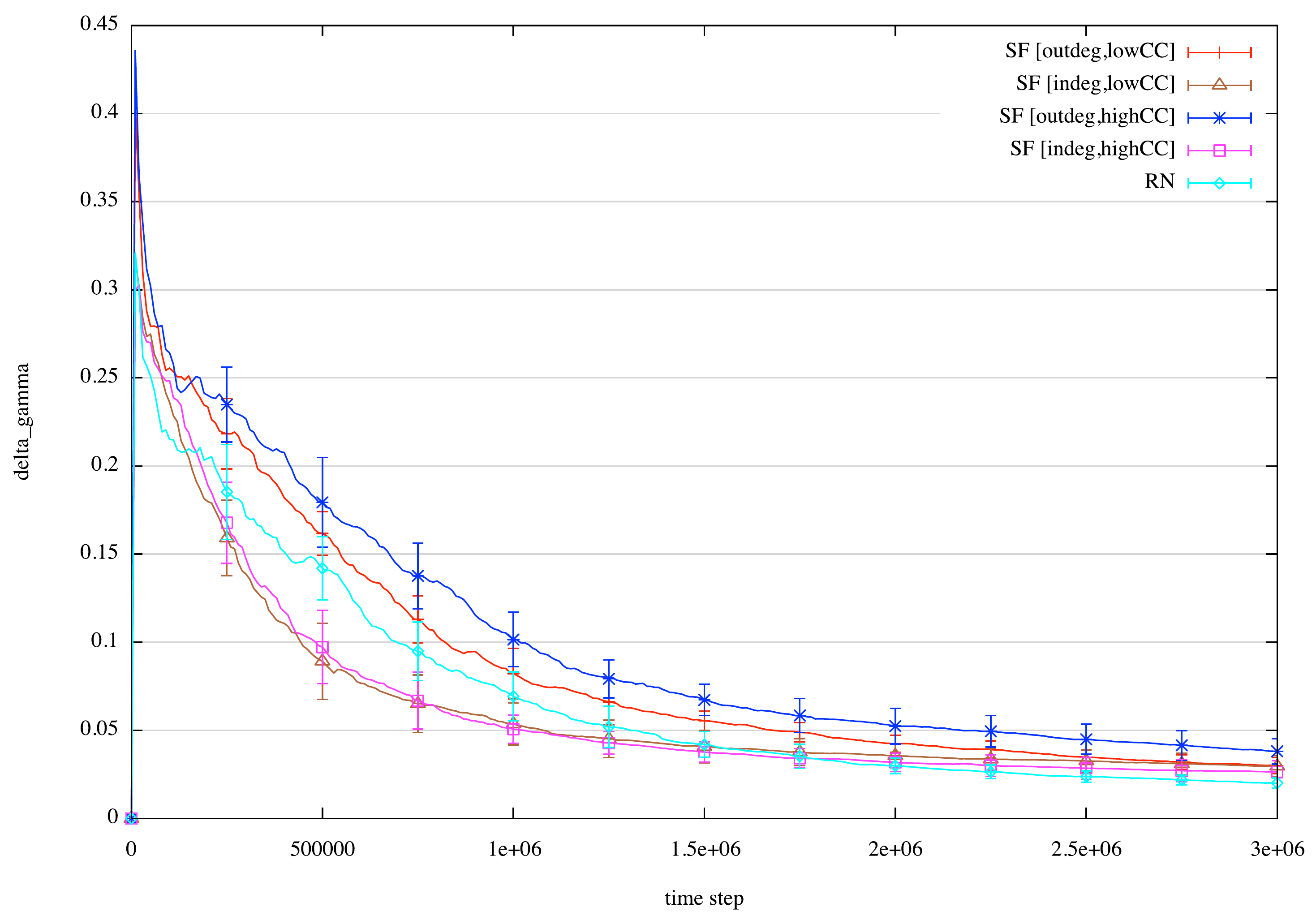}
	    			\label{fig1:a}
		    	}\\
			    \subfloat[Log-log plot of distribution of avalanche sizes $S$]{%
    				\includegraphics[scale=0.28]{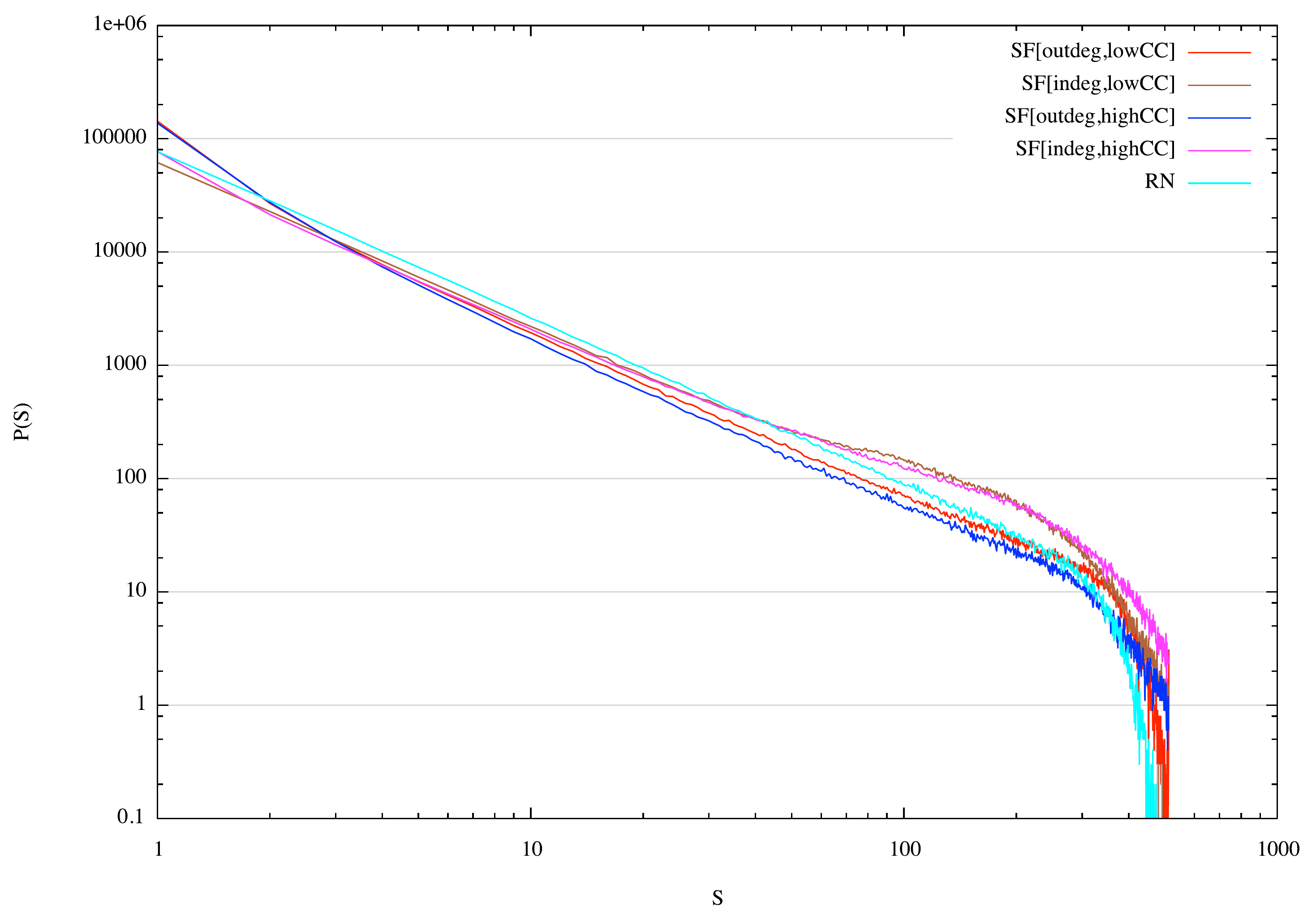}
	    			\label{fig1:b}
		    	}
    			\caption{
	    		    Deviation from power-law matching per time step for a network
		    	    of size $N = 512$. The minimum error reached at time step $3\times 10^6$
			        corresponds to the critical regime identified by a power law distribution
			        of avalanche sizes.
		            We show the averaged value of all realizations, but we do not present error bars
		            in Fig.~\ref{fig1:b} in order to make its presentation more accessible.
			    }
			    \label{fig1}
		    \end{figure}    
		    
		    Moreover, following Ref.~\cite{larremore2011predicting} we inspect the value of the largest eigenvalue
		    of the matrices $W$ associated to each network and whose entries $w_{ij}$ denote the synaptic weight
		    between node $i$ and $j$.
		    The authors in Ref.~\cite{larremore2011predicting} observe that the largest eigenvalue of the weight matrix
		    governs the dynamics of the system.
		    Through an analytical examination, it is reported that when the largest eigenvalue equals unity the system is at the
		    critical state.

	\begin{table}[t]
                \begin{tabular}{llrr}
                    \hline
                    Type & Subtype & Size & $\Lambda$\\
                    \hline
\multirow{8}{*}{Out-degree scale-free} \\                    
 & Low Mean CC & $128$ & $0.906\pm 0.029$\\
 & & $256$ & $0.9\pm 0.02$ \\
 & & $512$ & $0.95\pm 0.01$\\
 & & $1,024$ & $0.97\pm 0.008$\\
		\cline{2-4}
 & High Mean CC & $128$ & $0.89\pm 0.04$\\
 & & $256$ & $0.91\pm 0.03$\\
 & & $512$ & $0.91\pm 0.01$\\
 & & $1,024$ & $0.94\pm 0.01$\\
                    \hline 
\multirow{8}{*}{In-degree scale-free} \\ 
& Low Mean CC & $128$ & $0.98\pm 0.02$\\
 & & $256$ & $0.99\pm 0.01$\\
 & & $512$ & $1.0006\pm 0.006$\\
 & & $1,024$ & $1.001\pm 0.004$\\
		 \cline{2-4}
 & High Mean CC & $128$ & $0.96\pm 0.02$\\
 & & $256$ & $0.99\pm 0.02$\\
 & & $512$ & $1.002\pm 0.014$\\
 & & $1,024$ & $1.01\pm 0.017$\\
                     \hline
Random & & $128$ & $0.92\pm 0.012$\\
 & & $256$ & $0.99\pm 0.022$\\
 & & $512$ & $0.97\pm 0.001$\\
 & & $1,024$ & $0.98\pm 0.005$\\
 			\hline
Fully-connected & & $128$ & $0.91\pm 0.034$\\
 & & $256$ & $0.93\pm 0.0006$\\
 & & $512$ & $0.95\pm 0.001$\\
 & & $1,024$ & $0.98\pm 0.0005$ \\
 
                    \hline
                \end{tabular}
                \caption{
                		Largest eigenvalue $\Lambda$ of matrix $W$ of synaptic weights.
			It has been found analytically that $\Lambda = 1$ is associated with a system at criticality~\cite{larremore2011predicting}.
			The synaptic weight matrices of our networks have $\Lambda\approx 1$ due to finite-size effect.
			(We present mean values and standard deviations.)
                }
                \label{tabLargstEig}
	\end{table}
	
	In Table~\ref{tabLargstEig} we report the value of the largest eigenvalue $\Lambda$ of the weight matrices associated to our networks.
	In our experiments the critical state is not only identified by the power-law distribution of avalanche sizes (Fig.~\ref{fig1}) but also by the
	value close to unity of $\Lambda$.
	Due to finite size effects this value is not exactly unity but close to it.

        \subsection{Small-world property boosts network activity}
	        \label{results:spiking}
            The small-world property affects the rate of firing of nodes comprising a network.
            Fully-connected networks are structures in which all nodes exhibit a similar firing rate,
            giving rise to a well defined mean and variance (see Fig.~\ref{fig6:a})
            unlike scale-free networks in which the variance of the firing rate seemingly diverges, and
            thus its mean cannot characterize the network activity (not shown here).
            In fact, this latter type of structure contains nodes whose firing rate can far exceed the firing
            rate in fully-connected networks (see Fig.~\ref{fig3:a}).
            
            Fig.~\ref{fig6:a} shows the average of the total number of spikes over all nodes
            in fully-connected networks for all system sizes considered.
            In contrast, we will show that nodes in heterogeneous topologies can perform better;
            in this case, scale-free networks possess nodes with higher firing rates than random networks.
            It is worth mentioning that this behaviour is verified in all system sizes considered.
            
			\begin{figure}[t]
				\centering
				\subfloat[Total number of spikes]{%
					\includegraphics[scale=0.28]{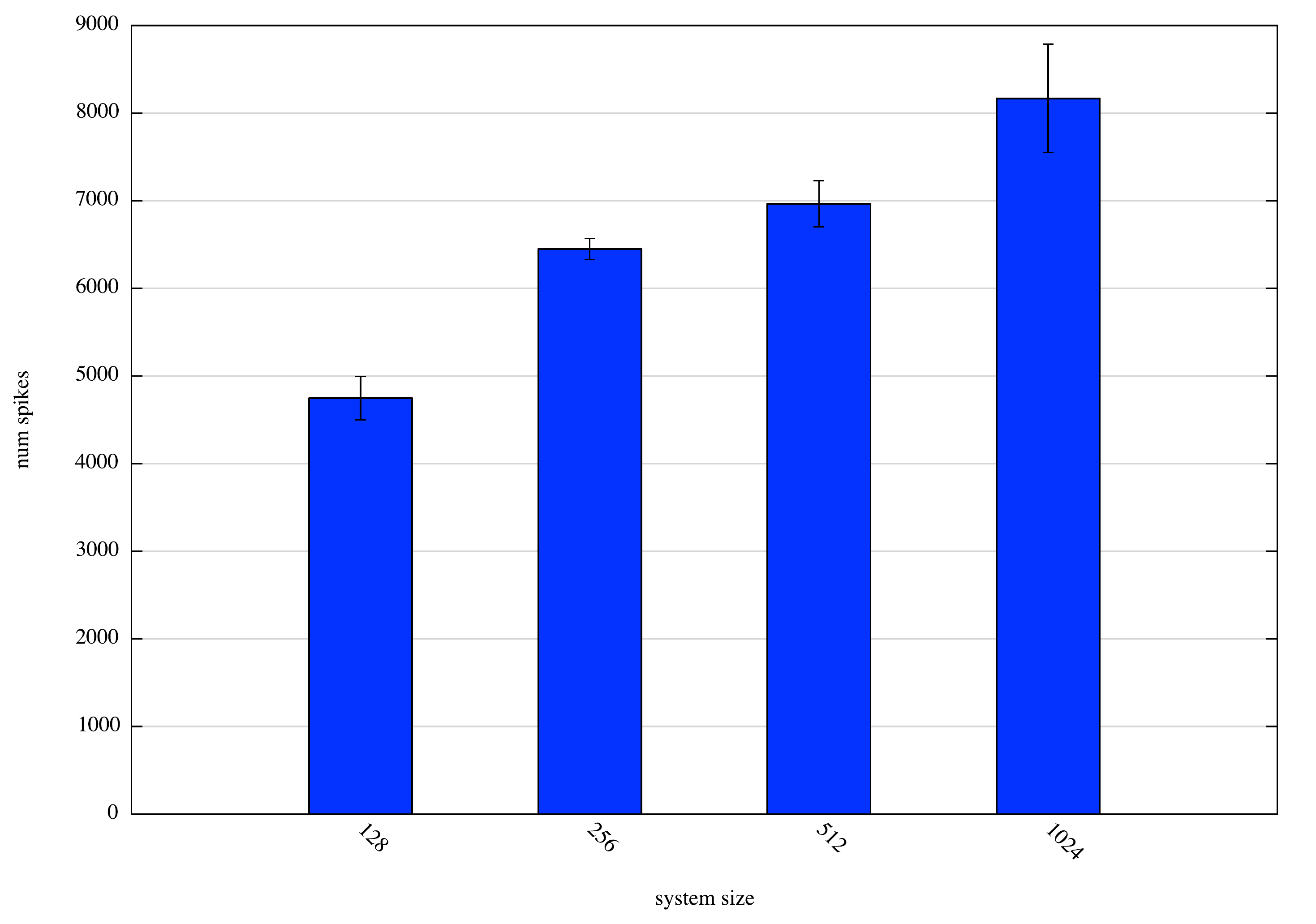}
					\label{fig6:a}
				}\\
				\subfloat[Total mean node success]{%
					\includegraphics[scale=0.28]{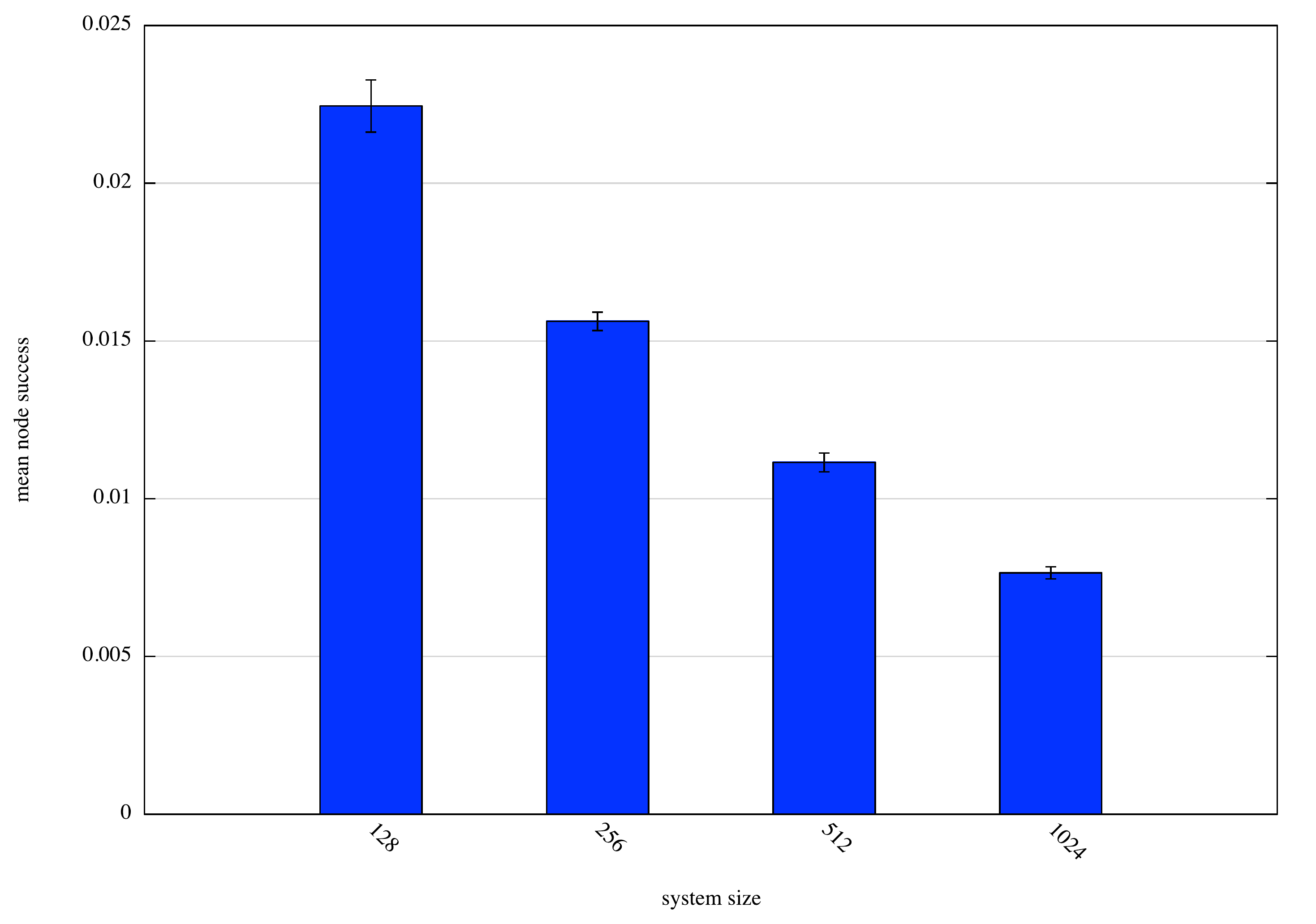}
					\label{fig6:b}
				}
				\caption{
				    Average number of spikes emitted by fully-connected networks per system size
				    and their mean success.
				    (Error bars denote standard deviations.)
				}
				\label{fig6}
			\end{figure}            
            
            We pose several questions regarding the relationship between network structure and
            dynamics.
            The first is: are the nodes with higher local CC those that
            spike more often, that is, do better-clustered nodes fire more?
            Surprisingly, nodes with low local CC exhibit a larger spiking
            rate than more clustered nodes.
            Fig.~\ref{fig3:a} shows this behaviour for networks of size $N = 1,024$.
            Here we show not only that low locally clustered nodes fire more but also that
            in in-degree scale-free networks nodes can fire more than in any other type of structure.

			\begin{figure}[t]
				\centering
				\subfloat[Total number of spikes]{%
					\includegraphics[scale=0.28]{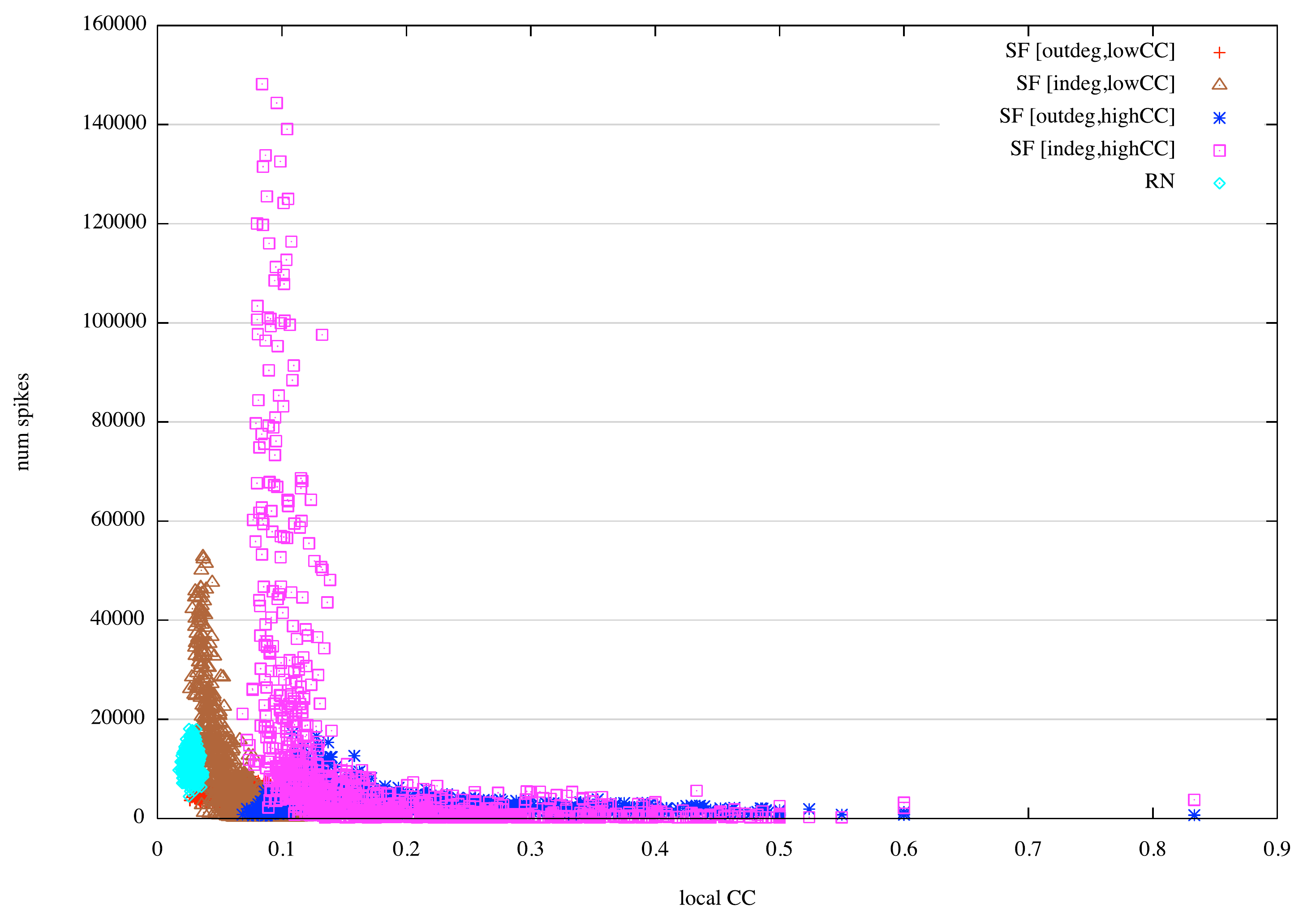}
					\label{fig3:a}
				}\\
				\subfloat[Total mean node success]{%
					\includegraphics[scale=0.28]{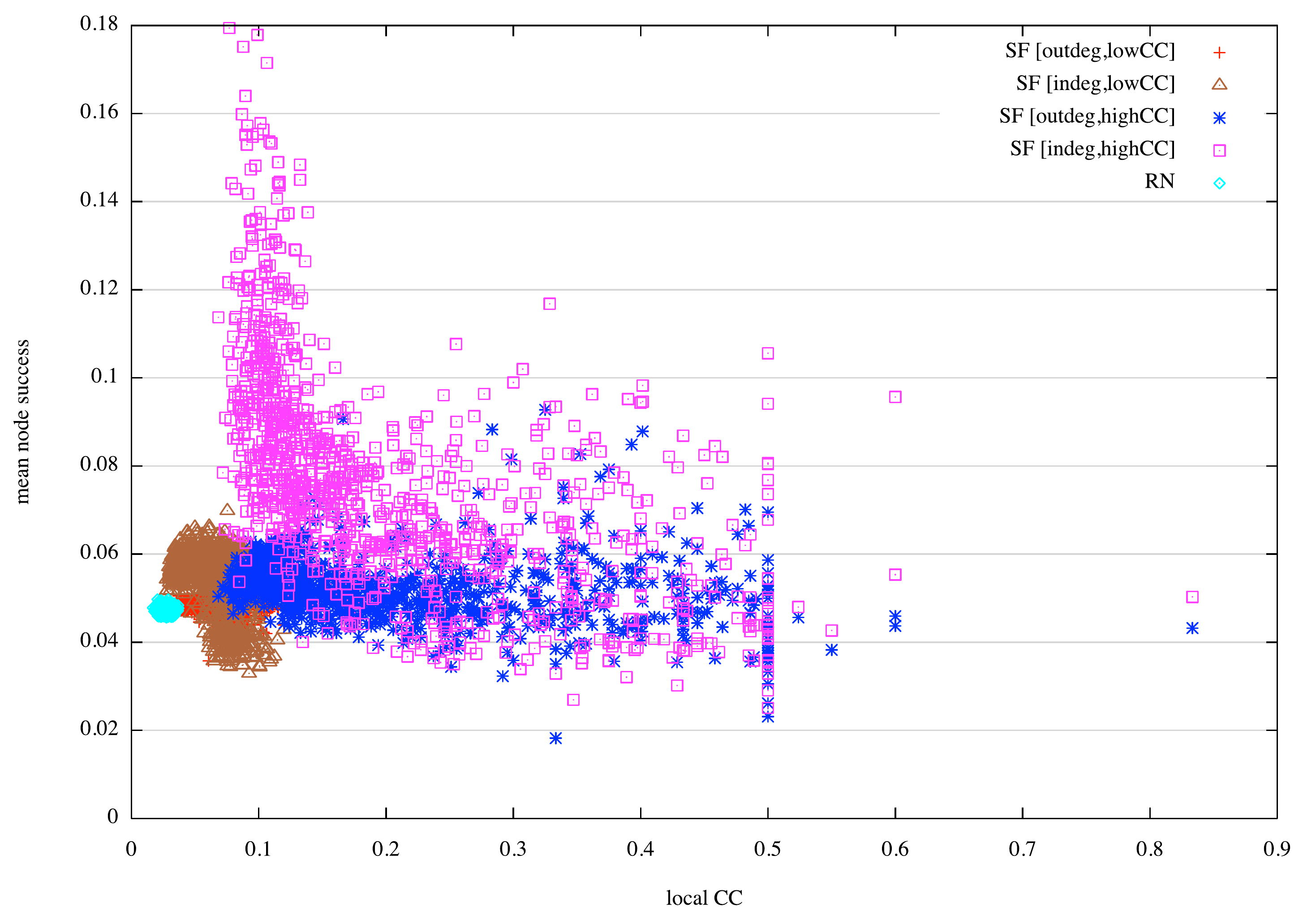}
					\label{fig3:b}
				}
				\caption{
				    Total number of spikes and mean node success per local CC
				    for heterogeneous topologies of size $N = 1,024$.
				    Low clustered nodes are responsible for spiking more frequent in
				    scale-free networks. Their spikes tend to be more successful as well.
				}
				\label{fig3}
			\end{figure}             
            
            A question that arises at this point is the following:
            are those low locally clustered nodes who spike so frequently in in-degree scale-free
            networks the absorbing hubs or any other type of node?
            Taking a look at network topology,
            we verify that indeed hubs (either absorbers or broadcasters) are in general low
            locally clustered.
            In Fig.~\ref{fig4} we present for scale-free networks of size $N = 1,024$ and two 
            levels of mean CC
            (low and high) the relationship between in-degree/out-degree and local CC.
            The more a node is in-connected the lower its local CC (likewise when reversing the 
            direction of edges, which yields broadcasting hubs).

			\begin{figure}[t]
				\centering
				\subfloat[Low mean CC]{%
					\includegraphics[scale=0.28]{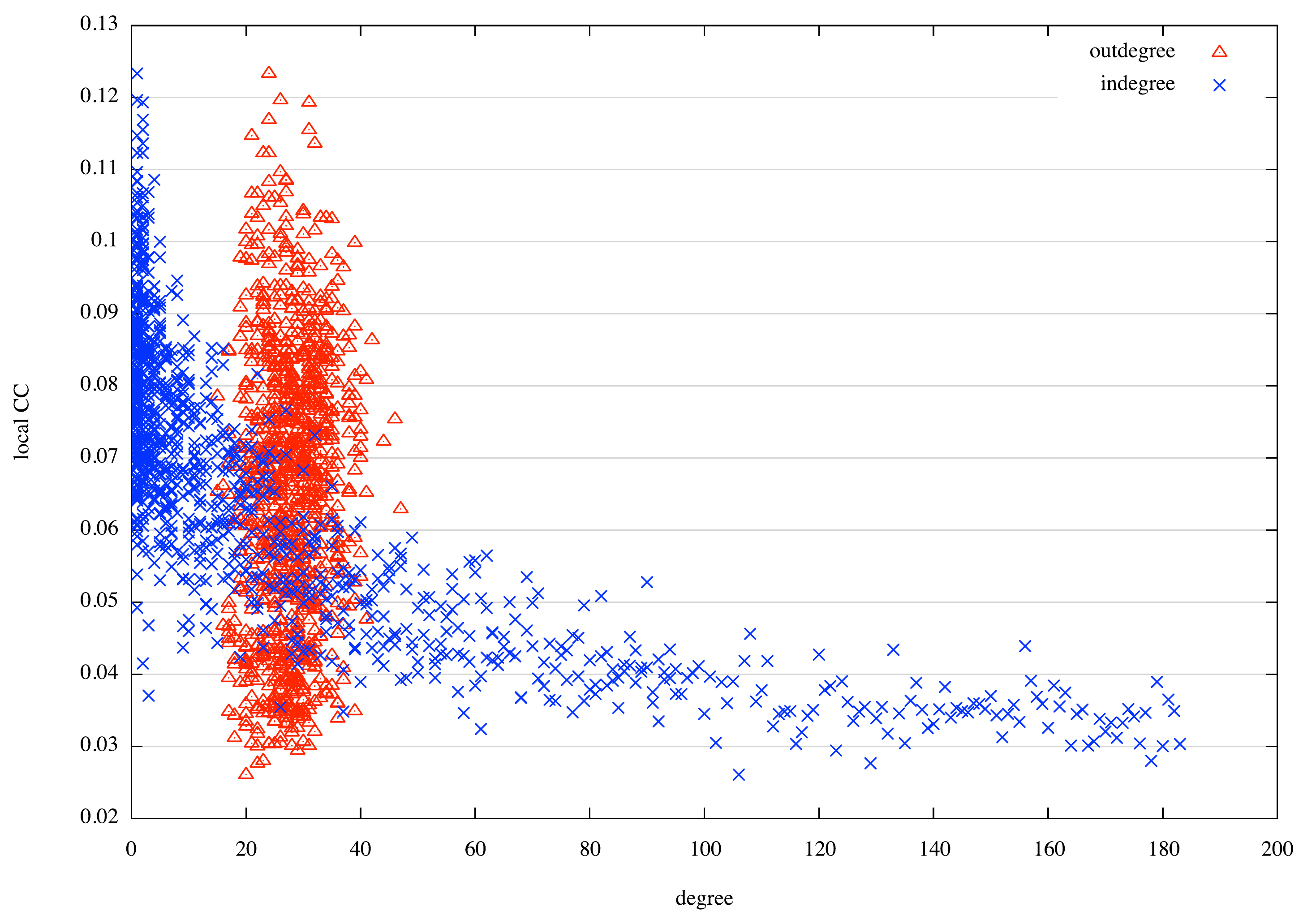}
					\label{fig4:a}
				}\\
				\subfloat[High mean CC]{%
					\includegraphics[scale=0.28]{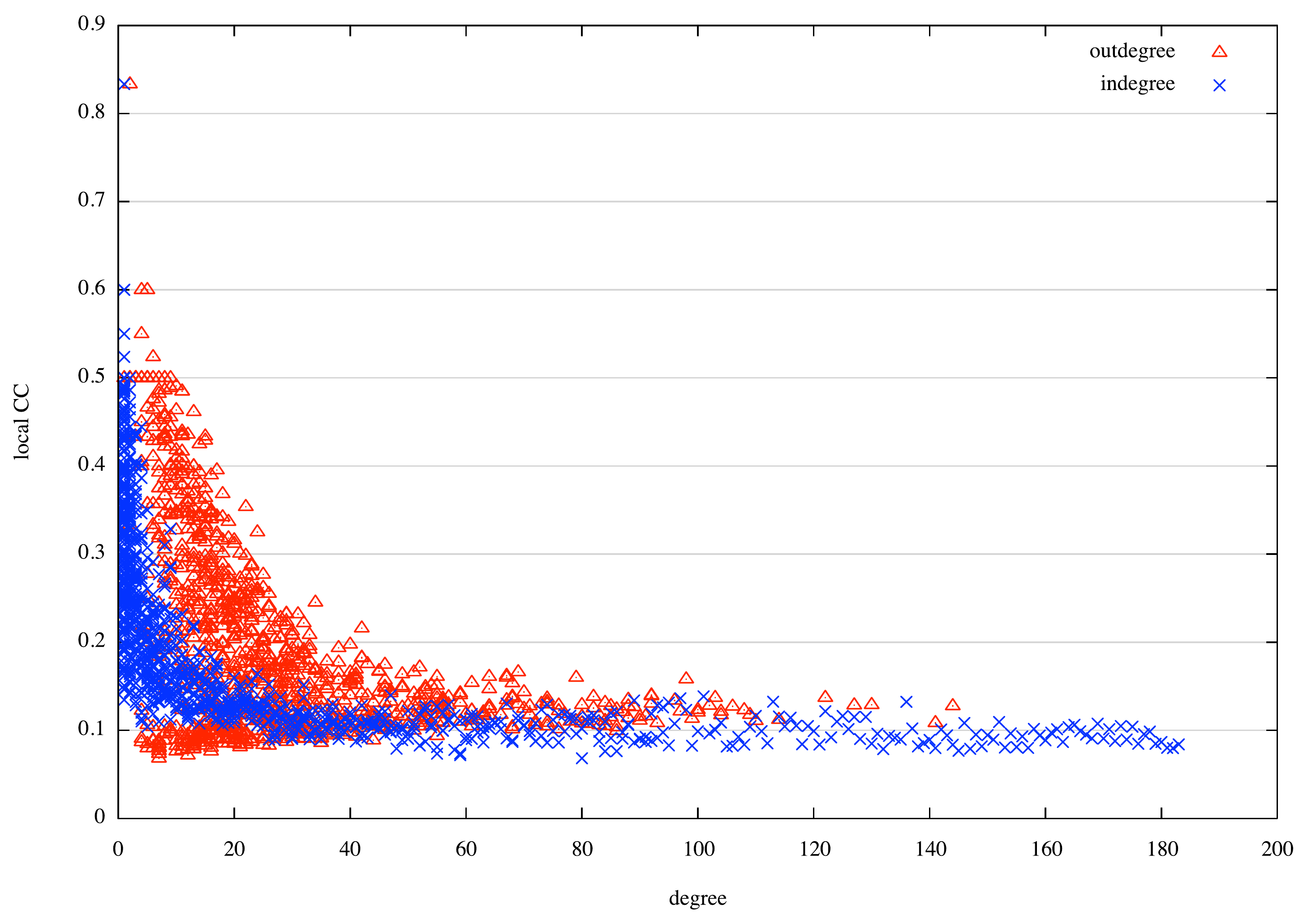}
					\label{fig4:b}
				}
				\caption{
				    In scale-free networks well connected nodes are less clustered.
				    Here we show for size $N = 1,024$ and in-degree scale-free nets
				    with the two levels of mean CC (low and high).
				    For the case of out-degree scale-free nets, degree distributions are switched
				    so that out-degree becomes in-degree and vice versa.
				}
				\label{fig4}
			\end{figure}             
            
            Then, we verify that better-connected nodes possess higher firing
            rate than any other type of node.
            In Fig.~\ref{fig5} we present how the two different degree distributions (in and 
            out) are related to spiking activity in scale-free networks.
            Fig.~\ref{fig5:a} shows this for low mean clustered scale-free networks, whereas
            Fig.~\ref{fig5:b} shows it for high mean clustered networks.
            For the case of out-degree scale-free networks, that is, networks that possess
            broadcasting hubs we do not observe any correlation between node out-degree and
            firing activity.
            This occurs in out-degree scale-free networks with low and high mean CC, and across all
            system sizes.
            However, for the case of in-degree scale-free networks, that is, networks that include
            absorbing hubs, we observe a positive correlation between node in-degree and spiking.
            This behaviour occurs in all system sizes.
            Interestingly, the behaviour of in-degree scale-free networks with low mean CC differ
            from the behaviour of the same type of network with high mean CC.
            Both exhibit a positive correlation between in-degree and total number of spikes, 
            nevertheless low mean clustered networks exhibit a linear trend, whereas high mean 
            clustered ones exhibit a non-linear trend.
            This suggests that as a network becomes more clustered (and more \emph{small-worldly})
            the activity of their nodes exhibit a more quadratic dependency of the node's in-degree.
            We do not explore this hypothesis in the present work, but it is the direction of future
            research.

			\begin{figure}[t]
				\centering
				\subfloat[Low mean CC]{%
					\includegraphics[scale=0.28]{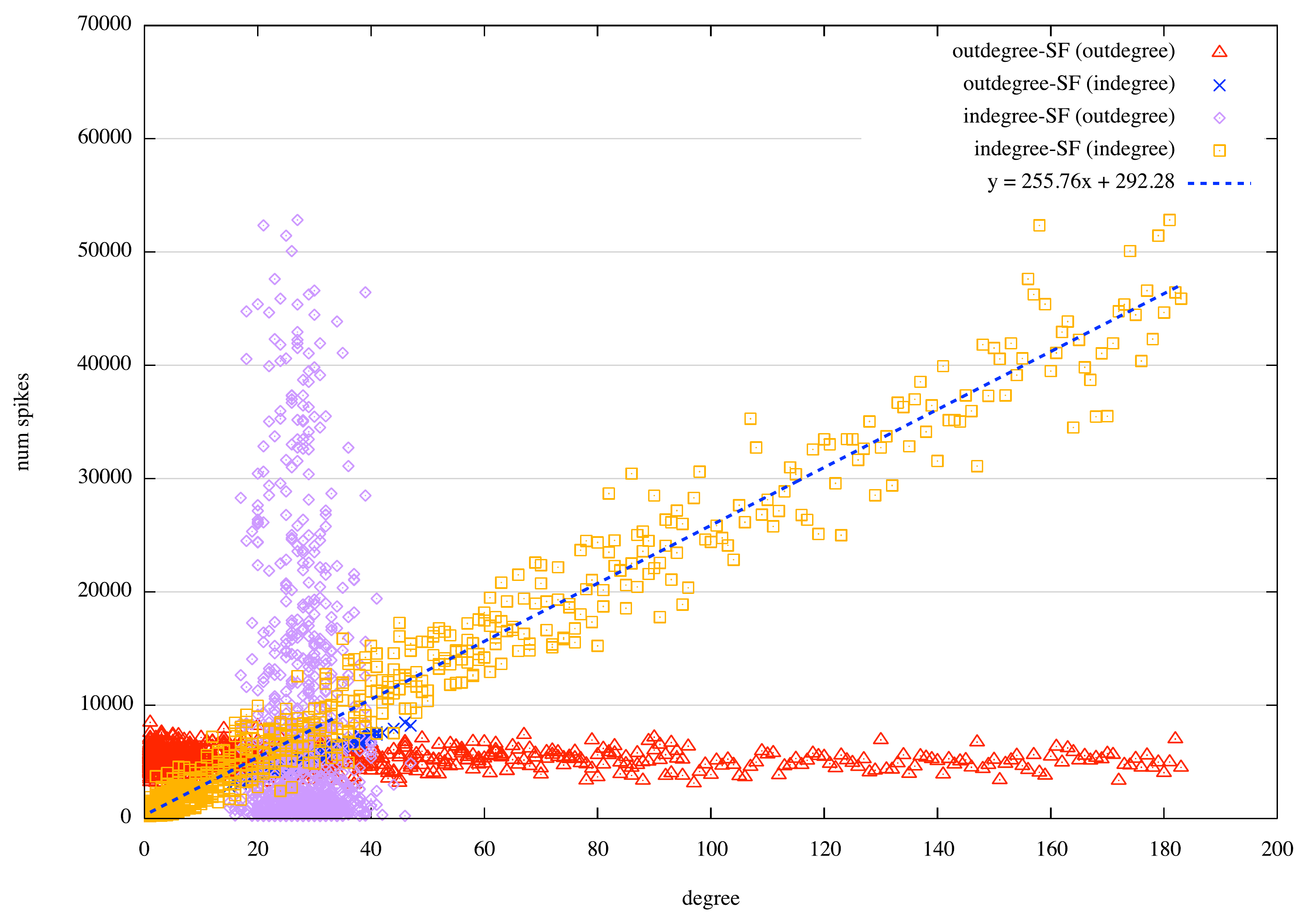}
					\label{fig5:a}
				}\\
				\subfloat[High mean CC]{%
					\includegraphics[scale=0.28]{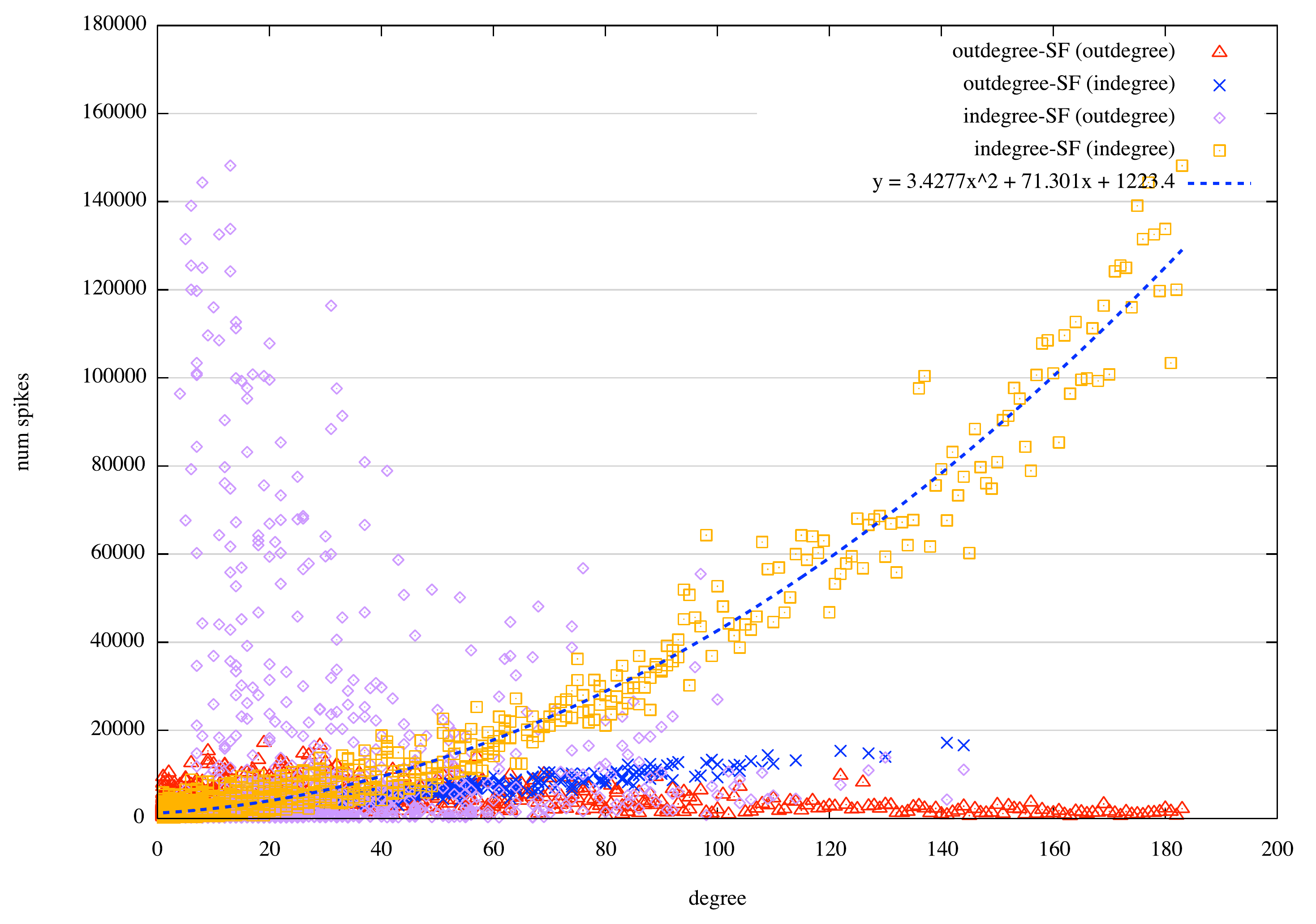}
					\label{fig5:b}
				}
				\caption{
				    In in-degree scale-free nets the absorbing hubs spike more than
				    any other type of node and their in-degree is correlated with the amount
				    of spikes fired.
				    For networks with higher mean CC this correlation is quadratic.
				    We show this behaviour for a network of size $N = 1,024$.
				}
				\label{fig5}
			\end{figure}

            At first, the observation of a positive correlation between spiking and node in-degree
            would seem very obvious, in the sense that nodes with high
            in-degree are driven beyond threshold frequently by the action of their in-neighbors.
            However a high in-degree cannot explain completely the high firing rate of this type
            of nodes, because this would also predict that nodes in fully-connected networks would
            possess a high firing rate, which is not the case 
            (see Fig.~\ref{fig6:a}).
            What is happening in fully-connected networks that prevents nodes from firing as much
            as the other heterogeneous structures considered even when these nodes are massively
            connected?
            We suggest that an obstructing behaviour is occurring in this globally-coupled structure.
            We name this phenomenon \emph{spike jamming} and we will discuss it in detail in
            Sect.~\ref{results:upperbound}.
            
            Random networks behave somehow
            similarly to scale-free networks in the sense that there is a positive correlation
            between node in-degree and spiking, but not between this latter and node out-degree.
            However, this type of structure does not reach the same amount of spiking per node as scale-free
            networks due to the random nature of their connectivity (not shown here).
            Thus, the presence of hubs account for the high firing rate in heterogeneous structures.
            
            Another question that arises at this point is the following.
            Does a high in-degree in scale-free networks account by itself for a high spiking
            activity? Or is it the joint action of in- and out-degree that explain this particular
            behaviour? In other words, could this high firing rate be explained by a specific
            configuration of in- and out-degree?
            To explore this question we considered the ratio of in-degree to out-degree per node,
            which for a node $i$ is given by:
            
            \begin{equation}
                \rho_{i} = \frac{\sum_{k = 1}^{N} A_{ki}}{\sum_{j = 1}^{N} A_{ij}}
            \end{equation}
            
            \noindent where the numerator is the in-degree of node $i$, and the denominator is its
            out-degree.
            The quantity $\rho$ is equal to unity when a node has the same number of incoming and
            outgoing connections. In fully-connected networks all nodes possess this property.
            If $\rho_{i} > 1$, then the in-degree of node $i$ is larger than its out-degree, and
            $\rho_{i} < 1$ when the opposite occurs.
            
            From fully-connected networks we have learned that homogeneity in node degree, that
            is, $\rho = 1$ for every node, is not a property suitable for spiking.
            Moreover, we verify this fact in heterogeneous structures where nodes with 
            $\rho\gg 1$ fire more than any other nodes.
            Fig.~\ref{fig7:a} shows this particular behaviour.
            There we present how spiking is improved as $\rho$ grows larger than unity.
            The solid black line at left of Fig.~\ref{fig7:a} marks the point where $\rho = 1$, 
            ie. where in-degree equals out-degree.
            For all heterogeneous structures considered (excluding random networks) a larger in-degree than out-degree
            correlates with higher firing rate.
            
 			\begin{figure}[t]
				\centering
				\subfloat[Total number of spikes]{%
					\includegraphics[scale=0.28]{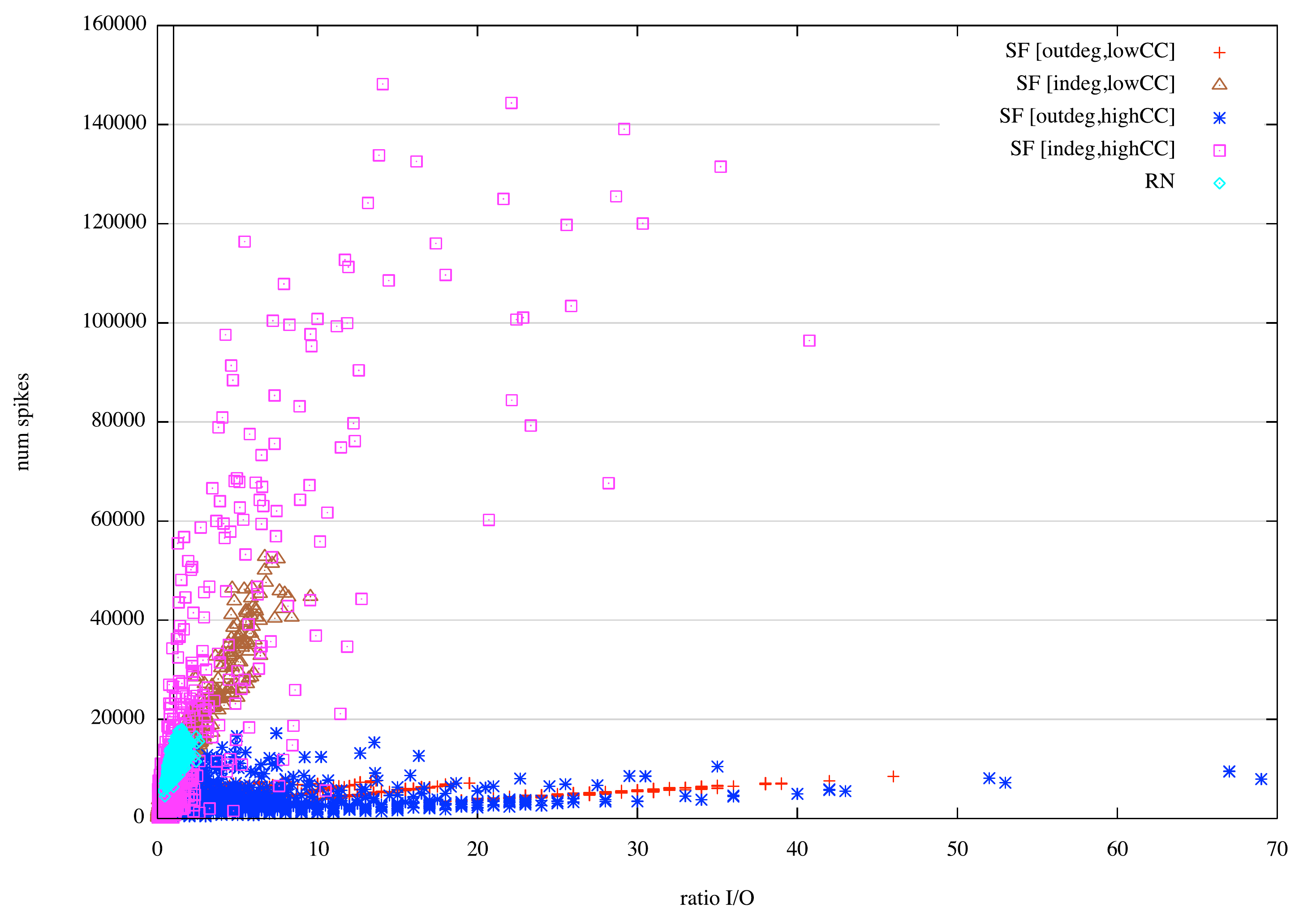}
					\label{fig7:a}
				}\\
				\subfloat[Total mean node success]{%
					\includegraphics[scale=0.28]{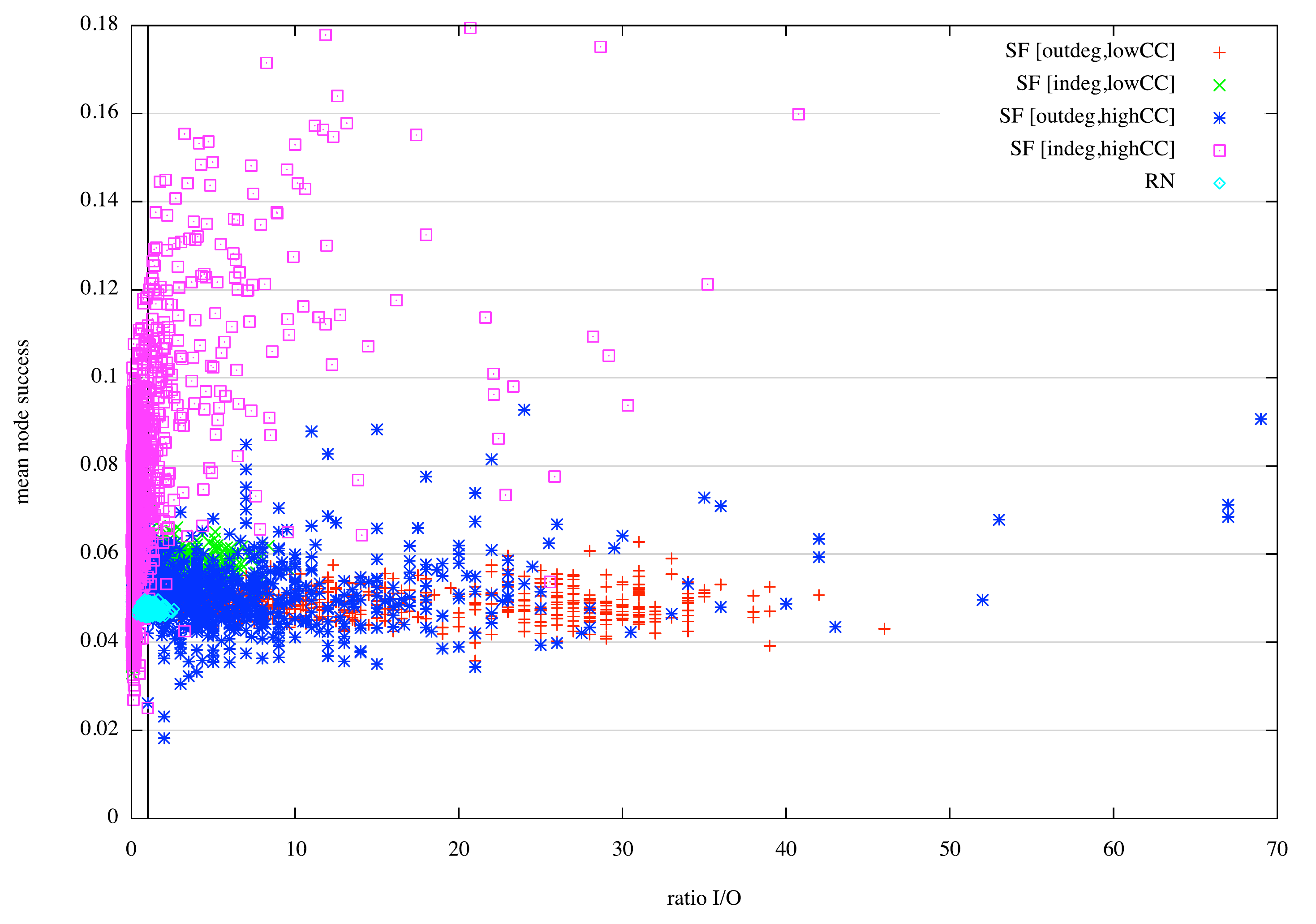}
					\label{fig7:b}
				}
				\caption{
				    Total number of spikes and their success per ratio in-degree/out-degree
				    for heterogeneous nets of size $N = 1,024$.
				    The solid line at the left lies at $\rho = 1$ where in-degree equals out-degree.
				    An equal number of incoming and outgoing connections cannot account for
				    higher spike rate and spike success.
				}
				\label{fig7}
			\end{figure}           
            
            In summary, in fully-connected networks nodes fire less than in any other topology.
            This is explained by the homogeneity of the nodes comprising the network, which 
            give rise to the phenomenon of spike jamming.
            In heterogeneous topologies, scale-free networks fire more than random networks,
            and the firing activity is improved by the presence of absorbing hubs and high mean
            CC, which implies a larger degree of small-world-ness.
            However, for this to happen absorbing hubs must possess the right amount of outgoing
            connections which is represented by $\rho\gg1$.
            
        \subsection{Scale-free topologies comprise more successful nodes}
        
            Spiking is not all that matters, since we should also consider the fate of a spike
            that has just been emitted.
            Here, we consider a \emph{successful spike} one that triggers subsequent spikes from the nodes
            in the out-neighborhood of the node where the initial spike originated.
            As we are interested in the propagation of activity within the system, we would like to 
            observe the sustained activation of nodes in subsequent time steps.
            This is where the notion of the \emph{branching ratio} comes to hand.
            The branching ratio $\sigma$ is defined as the ratio of descendants that
            become active at time $t + 1$ to ancestors that were active at time $t$.
            The quantity $\sigma$ has been used to characterize the critical state of a
            system~\cite{beggs2003neuronal} and to identify the regimes surrounding such a state.
            When $\sigma < 1$ the system is subcritical and activity dies out quickly,
            when this value is above unity, the system is supercritical and activity gets
            amplified pathologically at each time step.
            In between these two states lies the critical state in which activity is sustained
            until finite-size effects take place, during this regime $\sigma$ is equal to unity
            for a prolonged period of time.
            
            Recall that in Sect.~\ref{model:nodeSuccess} we defined the success of any give node $i$ as the
            fraction of out-neighbors that become active at time $t+1$ when node $i$ spiked at
            time $t$.
            This quantity is similar to the branching ratio $\sigma$ in the sense that it estimates
            the amount of activity sustained in subsequent time steps.
            However, unlike $\sigma$ our measure of node success is a \emph{local} estimation
            of performance, which has more natural implications in the context of neuronal
            networks, in which a neuron does not have access to global metrics regarding
            the structure of the network.
            
            As mentioned above, spiking does not imply success, and node success as defined above
            is intimately related to the notion of criticality.
            Here we repeat the same questions that we considered in the previous section,
            namely, how successful are nodes with high local CC? how successful are hubs? and
            finally, how network structure affect node success?
            
            For scale-free networks, nodes with low local CC are the most successful nodes.
            This behaviour is more evident in in-degree scale-free networks with high mean CC
            (see Fig.~\ref{fig3:b}).
            These low local CC nodes are the hubs, however unlike the firing activity, mean node
            success per node does not exhibit a very clear correlation between in- or out-degree
            and success (see Fig.~\ref{fig7:b}).
            Finally, random networks do not exhibit any particular pattern regarding the success
            of their nodes.
                
            Which is the most successful topology? In other words, what is the structure that
            maximises the success per node?
            To answer this question we estimated the mean node success of the system per time step
            (see Sect.~\ref{model:nodeSuccess})
            for all the topologies considered.
            
            For all system sizes we observe that fully-connected networks are the type of structure
            that performs worst (see Fig.~\ref{fig6:b}), followed by random networks.
            For the case of scale-free networks, in-degree scale-free networks with high mean CC
            are the most successful topologies.
            Second in place are in-degree scale-free networks with low mean CC, followed by
            out-degree scale-free networks with high and low mean CC, respectively.
            Thus, absorbing hubs in a scale-free structure allow for more node success per node
            if accompanied by a high degree of small-world-ness.
            
            Recall that the scale-free and random topologies have the same number of edges.
            Therefore, in-degree scale-free-ness with high mean CC is the permutation of edges
            that maximises node success.
            Fig.~\ref{fig8} shows the aforementioned behaviour for networks of size $N = 1,024$,
            the same is observed in all the other system sizes considered.
            
 			\begin{figure}[t]
				\centering
				\includegraphics[scale=0.3]{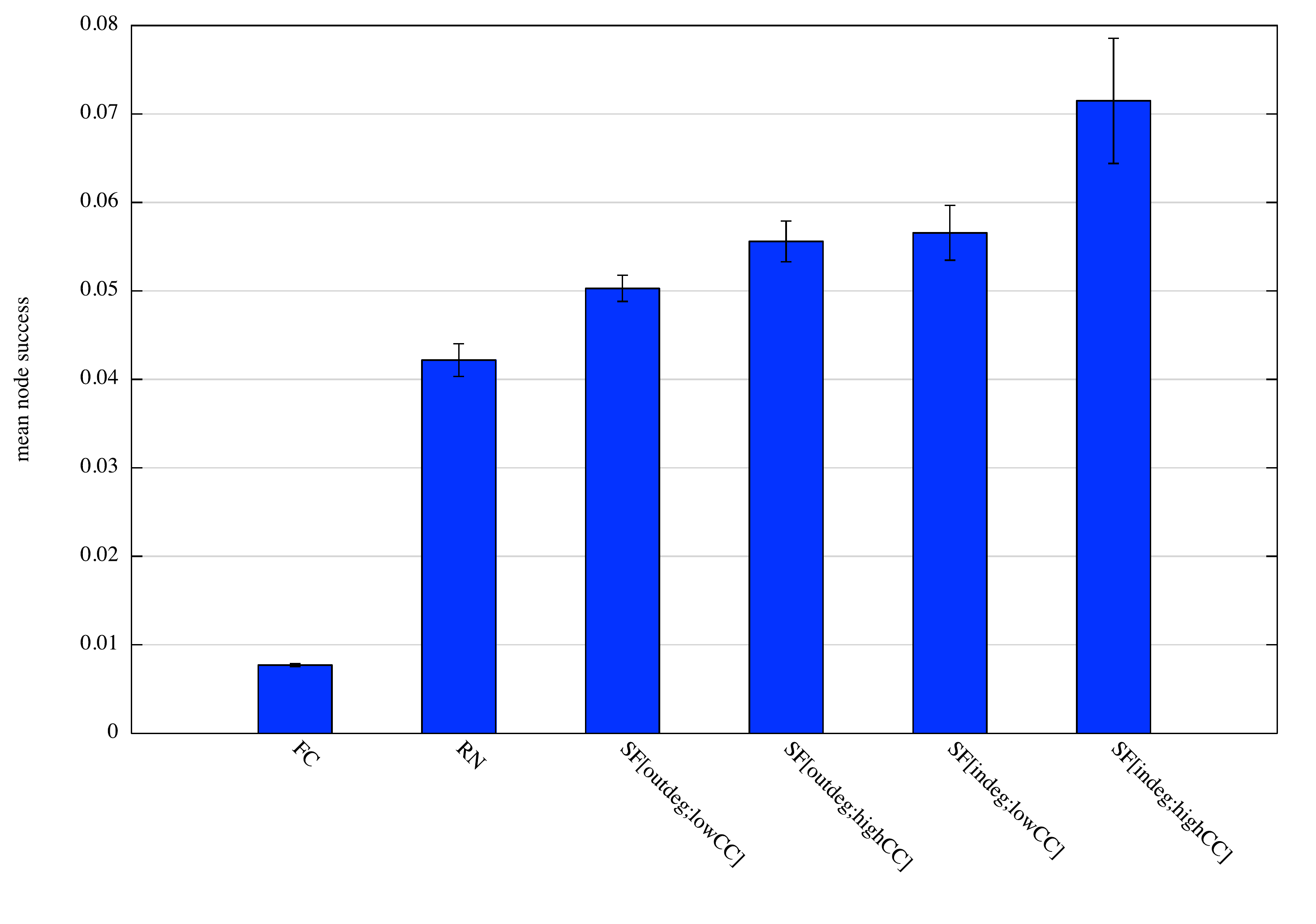}
				\caption{
				    Mean node success for scale-free
				    and random networks of size $N = 1,024$.
				    In-degree scale-free networks with high mean CC possess more successful nodes than any other
				    topology considered.
				    (Error bars denote standard deviations.)
				}
				\label{fig8}
			\end{figure}             
            
            Moreover, the value of the mean node success per time step is upper bounded.
            For a system to remain in the critical regime the mean node success must remain
            below a certain value.
            In other words, a high mean node success is related to the supercritical
            regime in which nodes fire constantly (not shown here);
            so that, if we were interested in maximizing the value of the mean node success we would have
            to leave the critical regime.
            Additionally, we report that the upper bound of the total mean node success at criticality decreases as the 
            system size grows.
            An example of this is shown in Fig.~\ref{fig6:b} where the maximum value of node success for
            fully-connected topologies decrease with system size.
            Although not shown here, this phenomenon is observed in all other topologies and system sizes.
        
        \subsection{Upper bound of mean node success for fully-connected nets}
	        \label{results:upperbound}
            As mentioned in the previous section, fully-connected networks perform worst as
            measured by the mean node success per time step.
            In this section we derive an analytical
            expression for the upper bound of this metric for globally-coupled structures.
    
            Recall that $h_i(t)$ denotes the membrane potential of node $i$ at time step $t$,
            and that $\theta>0$ denotes the threshold 
            membrane potential required to trigger a spike in a node: node $i$ will spike at $t$ when $h_i(t)\geq\theta$.             
            At $t=0$ the membrane potentials take values $h_i(0)<\theta$ for all $i$. 
            These potentials are then \emph{driven externally} each time step until the membrane potential of one 
            node is taken above the threshold membrane potential - triggering an avalanche.
            During an avalanche the membrane potentials evolve as follows:
            
            \begin{equation}\label{h_dynamics}
                h_i(t+1)=
                \begin{cases}
                    0 & \text{if $h_i(t)\geq\theta$} \\
                    h_i(t)+\displaystyle\sum_{j\in\mathcal{I}_i}w_{ij}s_j(t) & \text{otherwise}, \\
                \end{cases}
            \end{equation}
            where $\mathcal{I}_i$ denotes the set of in-neighbors of $i$, and recall that $w_{ij}$ describes the coupling
            between nodes $i$ and $j$, and that $s_j(t)=1$ if node $j$ spikes at $t$ and 0 otherwise, i.e.
            \begin{equation}
                s_j(t)=
                \begin{cases}
                    1 & \text{if $h_j(t)\geq\theta$} \\
                    0 & \text{otherwise}. \\
                \end{cases}
            \end{equation}
            Note that the membrane potentials are \emph{not} driven during the avalanche. The avalanche ends when there 
            are no more spiking nodes.
            Afterwards, the membrane potentials are then driven again each time step until another avalanche is    
            triggered. This process is repeated until the simulation ends.

            In a fully connected network each node has $(N-1)$ out-neighbors, where $N$ denotes the number of nodes in 
            the network. Specifically, the out-neighbors of node $i$ are all the nodes $j$ such that $j\neq i$. 
            Recall that the node success $\varphi_i(t)$ of node $i$ at time step $t$ is defined as the 
            fraction of out-neighbors of $i$ which spike at time step $t+1$, given that $i$ spikes at $t$.
            Therefore, if $i$ spikes at $t$\footnote{$\varphi_i(t)$ is undefined if node $i$ does not spike at $t$.}, 
            Eqn.~\eqref{eq:phi} becomes
            
            \begin{equation}\label{phi_t}
                \varphi_i(t)=\frac{S(t+1)}{N-1}
            \end{equation}
            in a fully connected network, where $S(t)$ denotes the number of nodes which spike at time step $t$. Note that 
            the right-hand side of the above expression
            is independent of $i$. This reflects the fact that $\varphi_i(t)$ is identical for all nodes $i$ which spike 
            at 
            $t$. For this reason we will henceforth
            omit the subscript $i$, and deal only with the quantity $\varphi(t)$, the node success of \emph{any} node 
            which 
            spikes at $t$.
            This observation is only valid for fully-connected networks in which all nodes receive connections from
            each other and send connections likewise.

            Consider the mean node success for nodes which spike during any period of $\tau$ time steps. 
            Since $S(t)$ nodes
            spike at $t$, this is given by
            
            \begin{equation}\label{mean_phi_def}
                \langle\varphi\rangle=\frac{1}{\mathcal{S}}\sum_{t=1}^{\tau}S(t)\varphi(t),
            \end{equation}
            where without loss of generality we have chosen the $\tau$ timsteps to be $t=1,2,\dotsc,\tau$, and
            \begin{equation}\label{calS_def}
                \mathcal{S}\equiv\sum_{t=1}^{\tau}S(t)
            \end{equation}
            
            is the total number of spikes which occur during this period.
            A crucial aspect of the dynamics described above is that $h_i(t+1)=0$ if node $i$ spikes at 
            time $t$. It is therefore impossible for 
            node $i$ to spike on two adjacent time steps -- after spiking, the membrane potential of node $i$ is 
            \emph{frozen} (due to refractoriness) to be zero for a single time step, 
            during which time it cannot `accumulate activity' from spiking in-neighbors in the manner described by 
            Eqn.~\eqref{h_dynamics}. 
            This constraint can be expressed mathematically as
            \begin{equation}\label{freeze_constraint}
                s_i(t)+s_i(t+1)\leq 1 \quad\text{for all $i$, $t$},
            \end{equation}
            and leads to the following theorem:
            \begin{theorem}\label{theorem}
                $\langle\varphi\rangle$ has an upper bound of
                \begin{equation}\label{phi_max}
                    \langle\varphi\rangle_{\text{max}}=\frac{N}{2(N-1)}
                \end{equation}
                in the limit $\tau\to\infty$. This upper bound is realised when $S(t)=N/2$ for all $t>1$.
            \end{theorem}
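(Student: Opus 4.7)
The plan is to reduce the bound on $\langle\varphi\rangle$ to a one-line algebraic inequality between consecutive avalanche sizes. By Eqn.~\eqref{phi_t}, $\varphi(t)=S(t+1)/(N-1)$ in a fully-connected network, so Eqn.~\eqref{mean_phi_def} becomes
\begin{equation*}
\langle\varphi\rangle \;=\; \frac{1}{(N-1)\mathcal{S}}\sum_{t=1}^{\tau} S(t)\,S(t+1),
\end{equation*}
and the task reduces to bounding the cross-correlation $\sum_t S(t)S(t+1)$ in terms of $\mathcal{S}=\sum_t S(t)$.

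The key input is the refractoriness constraint~\eqref{freeze_constraint}: summing it over $i$ yields $S(t)+S(t+1)\le N$ for every $t$. Combined with AM--GM, this produces the pointwise inequality
\begin{equation*}
S(t)\,S(t+1) \;\le\; \frac{(S(t)+S(t+1))^{2}}{4} \;\le\; \frac{N}{4}\bigl(S(t)+S(t+1)\bigr),
\end{equation*}
which is saturated precisely when $S(t)=S(t+1)=N/2$. I would then sum this from $t=1$ to $\tau$ and telescope the right-hand side via $\sum_{t=1}^{\tau}(S(t)+S(t+1))=2\mathcal{S}+S(\tau+1)-S(1)$ to obtain $\sum_t S(t)S(t+1) \le N\mathcal{S}/2 + N(S(\tau+1)-S(1))/4$.

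Dividing by $(N-1)\mathcal{S}$ recovers the asserted upper bound $N/[2(N-1)]$ plus a boundary correction of order $N^{2}/\mathcal{S}$, and this correction vanishes as $\tau\to\infty$ because the external-driving mechanism repeatedly triggers fresh avalanches, forcing $\mathcal{S}\to\infty$. The saturation clause of the pointwise inequality then forces $S(t)=N/2$ at every step past the first, matching the stated equality condition $S(t)=N/2$ for $t>1$. There is no real obstacle here: the delicate point is conceptual rather than technical, namely recognising that the sole ingredient required is the refractoriness constraint, which after summation over nodes collapses to the scalar bound $S(t)+S(t+1)\le N$; everything else is AM--GM and a standard telescoping estimate disposing of the boundary term in the $\tau\to\infty$ limit.
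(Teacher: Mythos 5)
Your proof is correct, and it takes a route that differs from the paper's in a useful way. Both arguments rest on the same two ingredients --- the reduction $\langle\varphi\rangle=\frac{1}{(N-1)\mathcal{S}}\sum_t S(t)S(t+1)$ and the scalar refractoriness constraint $S(t)+S(t+1)\le N$ obtained by summing Eqn.~\eqref{freeze_constraint} over nodes --- but they exploit the constraint differently. The paper writes $S(t)S(t+1)\le S(t+1)\bigl[N-S(t+1)\bigr]$ and maximises each term as a quadratic in $S(t+1)$ (at $S(t+1)=N/2$) while holding the normalisation $\mathcal{S}$ fixed, then substitutes the value of $\mathcal{S}$ consistent with that maximiser and takes $\tau\to\infty$; since $\mathcal{S}$ is itself a functional of the sequence being optimised, this per-term optimisation really only evaluates the objective at the conjectured extremal profile rather than producing a bound that is manifestly uniform over all admissible sequences. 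Your symmetrised inequality $S(t)S(t+1)\le\frac{1}{4}\bigl(S(t)+S(t+1)\bigr)^2\le\frac{N}{4}\bigl(S(t)+S(t+1)\bigr)$, followed by the telescoping identity $\sum_{t=1}^{\tau}\bigl(S(t)+S(t+1)\bigr)=2\mathcal{S}+S(\tau+1)-S(1)$, sidesteps this entirely: the bound on the numerator is linear in $\mathcal{S}$, so dividing by $(N-1)\mathcal{S}$ yields the constant $N/[2(N-1)]$ plus an explicit boundary correction of at most $N^2/[4(N-1)\mathcal{S}]$, valid for every realisation and every finite $\tau$. What your version buys is rigour and an explicit finite-$\tau$ error term; what the paper's version buys is that the extremal profile $S(t)=N/2$ appears immediately from the vertex of the quadratic, whereas you recover it from the equality cases of AM--GM and of $S(t)+S(t+1)\le N$. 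The one point to make explicit is your appeal to $\mathcal{S}\to\infty$: this does hold because the external drive guarantees recurrent avalanches, and the paper's limit evaluation implicitly relies on the same fact, so there is no gap.
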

            
            \begin{proof}
                Substituting Eqn.~\eqref{phi_t} into Eqn.~\eqref{mean_phi_def} gives
                \begin{equation}\label{proof_phi}
                    \langle\varphi\rangle=\frac{1}{\mathcal{S}(N-1)}\sum_{t=1}^{\tau}S(t)S(t+1).
                \end{equation}
                Defining the quantity
                \begin{equation}\label{tilde_s_def}
                    \tilde{S}(t)\equiv S(t)/\sqrt{\mathcal{S}},
                \end{equation}
                Eqn. \eqref{proof_phi} can be expressed as
                \begin{equation}\label{proof_phi_2}
                    \langle\varphi\rangle=\frac{1}{(N-1)}\sum_{t=1}^{\tau}\tilde{S}(t)\tilde{S}(t+1).
                \end{equation}
                Now, taking the summation of Eqn. \eqref{freeze_constraint} over all nodes $i$ yields
                \footnote{In obtaining Eqn. \eqref{st_constraint} we use the fact that if $x\leq a$ and $y\leq b$ then 
                $x+y\leq a+b$.}
                \begin{equation}\label{st_constraint}
                    S(t)+S(t+1)\leq N \quad\text{for all $t$}
                \end{equation}
                after noting that
                \begin{equation}
                    S(t)=\sum_is_i(t).
                \end{equation}
                Subtracting $S(t+1)$ from both sides of Eqn. \eqref{st_constraint} and then multiplying throughout by 
                $S(t+1)$ yields
                \begin{equation}
                    S(t)S(t+1)\leq S(t+1)\bigl[N-S(t+1)\bigr]\quad\text{for all $t$},
                \end{equation}
                which can be expressed as
                \begin{equation}
                    \tilde{S}(t)\tilde{S}(t+1)\leq \tilde{S}(t+1)\Biggl[\frac{N}{\sqrt{\mathcal{S}}}-\tilde{S}(t+1)\Biggr]       \quad\text{for all $t$}
                \end{equation}
                after dividing both sides by $\mathcal{S}$.
                \footnote{In obtaining these equations we have used the following rules for manipulating inequalities: if 
                $x\leq a$ then $x+c\leq a+c$ for all 
                $c$; if $x\leq a$ then $xc\leq ac$ if $c\geq 0$; if $x\leq a$ then $x/c\leq a/c$ if $c>0$. In applying the
                second of these rules 
                we used $c=s(t+1)$, which is $\geq 0$. In applying the last of these rules we used $c=\sqrt{\mathcal{S}}$,
                which is $>0$.}
                The right-hand side of the above inequality, which we denote as
                \begin{equation}\label{alpha_def}
                    \xi(t+1)\equiv\tilde{S}(t+1)\Biggl[\frac{N}{\sqrt{\mathcal{S}}}-\tilde{S}(t+1)\Biggr],
                \end{equation}
                is hence an upper bound for term $t$ on the right-hand side of Eqn. \eqref{proof_phi_2}. Therefore we can 
                write
                \begin{equation}\label{proof_phi_3}
                    \langle\varphi\rangle\leq\frac{1}{(N-1)}\sum_{t=1}^{\tau}\xi(t+1).
                \end{equation}
                Now, $\xi(t+1)$ is maximised when $\tilde{S}(t+1)=N\big/\bigl(2\sqrt{\mathcal{S}}\bigr)$. Therefore 
                $\tilde{S}(t+1)=N\big/\bigl(2\sqrt{\mathcal{S}}\bigr)$ for $t=1,2,\dotsc,\tau$ provides an upper bound for
                $\langle\varphi\rangle$, which,
                from substituting the aforementioned $\tilde{S}(t+1)$ into into Eqns.~\eqref{alpha_def} and 
                \eqref{proof_phi_3}, can be shown to be
                \begin{equation}\label{proof_phi_max}
                    \langle\varphi\rangle_{\text{max}}=\frac{1}{(N-1)}\biggl(\frac{N}{2}\biggr)^2\frac{\tau}{\mathcal{S}}.
                \end{equation}
                However, from Eqn.~\eqref{tilde_s_def}, $\tilde{S}(t+1)=N\big/\bigl(2\sqrt{\mathcal{S}}\bigr)$ for        
                $t=1,2,\dotsc,\tau$ corresponds to $S(t+1)=N/2$
                for $t=1,2,\dotsc,\tau$, and hence from Eqn.~\eqref{calS_def} also corresponds to
                \begin{equation}
                    \mathcal{S}=S(1)+\sum_{t=2}^{\tau}S(t)=S(1)+(\tau-1)N/2.
                \end{equation}
                Substituting this into Eqn.~\eqref{proof_phi_max} gives
                \begin{equation}
                    \langle\varphi\rangle_{\text{max}}=\frac{1}{(N-1)}\biggl(\frac{N}{2}\biggr)^2\frac{\tau}{S(1)+(\tau-1)N/2}.
                \end{equation}
                We emphasise that this is realised when $S(t+1)=N/2$ for $t=1,2,\dotsc,\tau$, or equivalently, when 
                $S(t)=N/2$ for $t=2,3,\dotsc,\tau+1$.
                Noting that $S(1)$ cannot exceed $N$, Theorem \ref{theorem} results when the limit $\tau\to\infty$ is 
                taken.
            \end{proof}
            
            Some remarks are due with regards to Theorem \ref{theorem}. 
            Firstly, $\langle\varphi\rangle$ as $\tau\to\infty$, which we henceforth refer to simply as $\langle\varphi\rangle$,
            describes the mean node success over all nodes over all time.
            
            Secondly, Theorem \ref{theorem} applies in a very general way to fully-connected networks, in the sense
            that for the purpose of proving the 
            theorem we have made no assumptions regarding how the system is driven between avalanches, or the initial 
            values of
            the membrane potentials. 
            Furthermore, we have made no assumptions regarding the specific values of $\theta$ or $w_{ij}$.
            
            Thirdly, since Theorem \ref{theorem} pertains to a fully-connected network, it follows that any network with
            $\langle\varphi\rangle>\langle\varphi\rangle_{\text{max}}$ \emph{cannot} be a fully connected network. In a similar
            vein, if one wishes to construct a network with $\langle\varphi\rangle>\langle\varphi\rangle_{\text{max}}$ 
            starting from a fully connected network, it is necessary that some connections between nodes are removed,
            that is, such a network should part from a massively connected to a less connected structure. 
            
            Fourthly, 
            $\langle\varphi\rangle_{\text{max}}$ decreases monotonically with $N$, and $\langle\varphi\rangle_{\text{max}}\to 1/2$ 
            in the thermodynamic limit, i.e., $N\to\infty$. 
            
            Finally, and most importantly, the theorem is non-trivial in the sense that one can easily conceive of networks of size
            $N$ whose global node successes can potentially exceed $\langle\varphi\rangle_{\text{max}}$. For instance, consider 
            the network corresponding to a `directed ring', where $A_{12}=1,A_{23}=1,\dotsc,A_{(N-1)N}=1,A_{N1}=1$, and $A_{ij}=0$ 
            for all other elements of the adjacency matrix. If $w_{ij}=\theta$, then assuming, without loss of generality, that first 
            spike in the network occurs on time step $t=1$ at node 1, then the spike propagates around the ring indefinitely: at 
            $t=2$, node 2 spikes; at $t=3$ node 3 spikes, at $t=N$ node $N$ spikes, at $t=N+1$ node 1 spikes, etc. In this case it 
            is easy to see that $\langle\varphi\rangle=1$, which is greater than $\langle\varphi\rangle_{\text{max}}$ for $N>2$. 
            Therefore the existence of an upper bound for fully-connected networks stems from some particular property of 
            their topology in combination with the dynamics described in Sect.~\ref{model}. 
            
            What is this property of fully-connected networks which places this upper bound on their node success? 
            As alluded to earlier, it is the fact that nodes are frozen for the time step after they spike which gives 
            rise to 
            the upper bound in fully-connected networks. 
            This behaviour gives rise to the phenomenon of \emph{spike jamming} that we mentioned in 
            Sect.~\ref{results:spiking}, and which we describe in detail below.
            
            Consider a single node $i$ firing at time step $t$ in a fully-connected
            network. For this node to be maximally successful, it must trigger all $N-1$ of its out-neighbors, i.e., all other
            nodes in the network, to spike at $t+1$. Suppose this happens, in which case $\varphi_i(t)=1$. Consider now one
            of the nodes $j\neq i$ which spikes at $t+1$. For $j$ to be maximally successful, all other nodes in the network
            must spike at $t+2$. However, on account of refractoriness, this is impossible. To elaborate, at $t+1$, all nodes except
            for $i$, and including $j$, are spiking. Therefore all these nodes must be frozen at $t+2$ - they cannot spike at
            $t+2$. On the other hand $i$, which spiked at $t$, while frozen at $t+1$, is free to spike at $t+2$. Hence, at best,
            only one of the $N-1$ out-neighbors of $j$, namely $i$, can spike at $t+2$, and therefore at best 
            $\varphi_j(t+1)=1/(N-1)$. For large $N$, $j$ is clearly very unsuccessful. The same applies for all other nodes which
            fire at $t+1$. Hence the result is that, while $i$ is maximally successful, the remaining $N-1$ nodes are extremely
            unsuccessful, and hence on average the whole network is unsuccessful during this avalanche - which we assume ends at
            $t+2$. This example illustrates the effect which underpins the upper bound for fully-connected networks:
            if a node $i$ spikes synchronously with one of its out-neighbors, then that out-neighbor is frozen for the next 
            time step, and hence cannot spike on the time step after $i$ spikes, which curtails the potential node success of $i$, and correspondingly the propagation of spikes throughout the network. Hence we refer to this effect as spike jamming. 
            Note that aforementioned effect occurs in all networks, not just fully-connected networks.
            However, fully-connected networks are special in that all nodes are out-neighbors of each other, and hence
            this effect has more potential to curtail the node success in fully-connected networks than in any other network.

\section{Discussion}
	\label{sect:discussion}
    In this paper we have presented arguments regarding the poor performance 
    (in terms of spiking of individual nodes and their success) 
    of fully-connected networks
    at criticality showing at the same time that scale-free networks perform much better than any other
    topology when paired with the small-world property.
    
    Given that the heterogeneous topologies possess exactly the same number of edges, we
    conclude that scale-free-ness with high degree of small-world-ness is a permutation of edges that
    allows nodes to be more successful and to be more active in terms of the number of spikes emitted.
    In particular, we have verified the statement above for the case of in-degree scale-free networks, which
    feature the presence of \emph{absorbing hubs}.
    However, real-world networks often comprise a more complex ecosystem in which absorbing hubs and
    broadcasting hubs coexist in the same network adding another layer of complexity to the dynamics within
    the system.
    Moreover, it is often the case that the structure of real-world networks is not static, but they possess
    mechanisms by which nodes become connected and disconnected over time as well as network growth or shrinkage;
    features that affect the collective activity in ways that cannot be predicted with the current model.
    
    This leads us to the next consideration. What real system are we describing with the current model?
    From a certain point of view, the model used here is very limited or simplistic, 
    however a model of integrate-and-fire units
    can actually be a simplified model of many phenomena in nature.
    A model of threshold units that accumulate activity from their vicinity and then propagate it when going
    beyond threshold can be used in principle to model the spread of epidemics, piles of granular matter,
    the release of energy and relaxation of tectonic plates, the effects of a stock market crash, 
    and the activity of neurons of the brain, among others.
    Thus, we believe that our model has a broad range of applications in diverse contexts, in which the presence
    of particular network properties such as the small-world property and long-tailed degree distributions
    have immediate effects on the dynamics of the system, be it the spread of a disease in a population, the propagation 
    of stimuli on cortical networks, or the spread of rumors and fads
    within a social network.
    Moreover, we believe that the introduction of the analysis of the success of a spike can be applied to the
    situations mentioned above.
    In other contexts, a spike could be thought of the transmission of an infection among contacts, 
    the death of a species in models of ecosystems, the failure of a power generator in power networks, and even in 
    on-line social networks such as Facebook or Twitter we might regard a spike as the action of writing a
    \emph{post} or a \emph{tweet}.
    In all these contexts, the fate of a spike is as relevant to the collective dynamics as is the network topology.
    Here we have shown that the combination of individual dynamics of nodes and topology determine the success of the 
    spikes that spread across the system.

\begin{acknowledgments}
    VHU would like to thank the Mexican National Council on Science and 
	Technology (CONACYT) fellowship no. 214055 for partially funding this work.
\end{acknowledgments}

	\bibliography{biblio}

\end{document}